\theoremstyle{definition}
\newtheorem{theorem}{Theorem}
\newtheorem{proposition}[theorem]{Proposition}
\theoremstyle{definition}
\newcommand{\mycomment}[1]{}
\newcommand{\E}{\textmd{E}}
\newcommand{\vect}{\textmd{vec}}
\newcommand{\ep}{\epsilon}
\newcommand{\bbE}{\mathbb{E}}
\newcommand{\bbI}{\mathbb{I}}
\newcommand{\bbN}{\mathbb{N}}
\newcommand{\bbP}{\mathbb{P}}
\newcommand{\1}{\mathbf{1}}
\newcommand{\mcE}{\mathcal{E}}
\newcommand{\mcF}{\mathcal{F}}
\newcommand{\mcU}{\mathcal{U}}
\newcommand{\ra}{\rightarrow}
\newcommand{\del}{\partial}
\newcommand{\diag}{\textmd{diag}}
\newcommand{\JJ}{J^*}
\newcommand{\Polya}{P\'{o}lya}
\newcommand{\tablesize}{\footnotesize}
\newcommand{\floor}[1]{\lfloor #1 \rfloor}
\newtheorem{mydefn}{Definition}
\def\E{\qopname\relax o{E}}
\newcommand{\bbeta}{\boldsymbol{\beta}}
\newcommand{\N}{\mbox{N}}
\newcommand{\PG}{\mbox{PG}}
\newcommand{\Ga}{\mbox{Ga}}
\newcommand{\Lo}{\mbox{Lo}}
\newcommand{\KS}{\mbox{KS}}
\title{Bayesian inference for logistic models using \\ \Polya-Gamma latent variables}
\author{Nicholas G. Polson\footnote{ngp@chicagobooth.edu}\\
\textit{\normalsize{University of Chicago}}\\ \\
James G. Scott\footnote{james.scott@mccombs.utexas.edu}\\
Jesse Windle\footnote{jwindle@ices.utexas.edu} \\
\textit{\normalsize{University of Texas at Austin}}}
\date{First Draft: August 2011\\
This Draft: July 2013}
\begin{document}

\maketitle
\begin{abstract}
\noindent We propose a new data-augmentation strategy for fully Bayesian inference in models with binomial likelihoods.  The approach appeals to a new class of \Polya-Gamma distributions, which are constructed in detail.  A variety of examples are presented to show the versatility of the method, including logistic regression, negative binomial regression, nonlinear mixed-effects models, and spatial models for count data.  In each case, our data-augmentation strategy leads to simple, effective methods for posterior inference that: (1) circumvent the need for analytic approximations, numerical integration, or Metropolis--Hastings; and (2) outperform other known data-augmentation strategies, both in ease of use and in computational efficiency.  All methods, including an efficient sampler for the \Polya-Gamma distribution, are implemented in the R package \verb|BayesLogit|.

In the technical supplement appended to the end of the paper, we provide further details regarding the generation of \Polya-Gamma random variables; the empirical benchmarks reported in the main manuscript; and the extension of the basic data-augmentation framework to contingency tables and multinomial outcomes.

\end{abstract}

\begin{spacing}{1.1}

\section{Introduction}

Bayesian inference for the logistic regression model has long been recognized as a hard problem, due to the analytically inconvenient form of the model's likelihood function.  By comparison, Bayesian inference for the probit model is much easier, owing to the simple latent-variable method of \citet{albert:chib:1993} for posterior sampling.

In the two decades since the work of \citet{albert:chib:1993} on the probit model, there have been many attempts to apply the same missing-data strategy to the logit model \citep[e.g.][]{holmes:held:2006,fruhwirth-schnatter-fruhwirth-2010,gramacy:polson:2012}.  The results have been mixed.  Certainly many of these approaches have been used successfully in applied work.  Yet they all involve data-augmentation algorithms that are either approximate, or are significantly more complicated than the Albert/Chib method, as they involve multiple layers of latent variables.  Perhaps as a result, the Bayesian treatment of the logit model has not seen widespread adoption by non-statisticians in the way that, for example, the Bayesian probit model is used extensively in both political science and market research \citep[e.g.][]{rossi:allenby:mcculloch:2005,jackman:2009}.  The lack of a standard computational approach also makes it more difficult to use the logit link in the kind of complex hierarchical models that have become routine in Bayesian statistics.

In this paper, we present a new data-augmentation algorithm for Bayesian logistic regression.  Although our method involves a different missing-data mechanism from that of \citet{albert:chib:1993}, it is nonetheless a direct analogue of their construction, in that it is both exact and simple.  Moreover, because our method works for any binomial likelihood parametrized by log odds, it leads to an equally painless Bayesian treatment of the negative-binomial model for overdispersed count data.

This approach appeals to a new family of \Polya-Gamma distributions, described briefly here and constructed in detail in Section \ref{sec:polyagammadetails}.

\begin{mydefn}
A random variable $X$ has a \Polya-Gamma distribution with parameters $b > 0$ and $c \in \mathcal{R}$, denoted $X \sim \PG(b,c)$,  if
\begin{equation}
\label{eqn:PGdistribution1}
X \stackrel{D}{=} \frac{1}{2 \pi^2} \sum_{k=1}^{\infty} \frac{g_k}{(k-1/2)^2 + c^2/(4\pi^2)} \, ,
\end{equation}
where the $g_k \sim \Ga(b,1)$ are independent gamma random variables, and where $\stackrel{D}{=}$ indicates equality in distribution.
\end{mydefn}

Our main result (Theorem \ref{thm:PGmixture}, below) is that binomial likelihoods parametrized by log-odds can be represented as mixtures of Gaussians with respect to a \Polya-Gamma distribution.  The fundamental integral identity at the heart of our approach is that, for $b>0$,
\begin{equation}
 \label{eqn:logitlikelihood-general}
\frac{(e^{\psi})^a}{(1+e^{\psi})^b} = 2^{-b} e^{\kappa \psi} \int_0^{\infty} e^{-\omega \psi^2 / 2} \  p(\omega) \ d \omega \, ,
\end{equation}
where $\kappa = a-b/2$ and $\omega \sim \PG(b,0)$.  When $\psi = x^T \bbeta$ is a linear function of predictors, the integrand is the kernel of a Gaussian likelihood in $\bbeta$.  Moreover, as we will show below, the implied conditional distribution for $\omega$, given $\psi$, is also a \Polya-Gamma distribution. This suggests a simple strategy for Gibbs sampling across a wide class of binomial models: Gaussian draws for the main parameters, and \Polya-Gamma draws for a single layer of latent variables.

The success of this strategy depends upon the existence of a simple, effective way to simulate \Polya-Gamma random variables.  The sum-of-gammas representation in Formula (\ref{eqn:PGdistribution1}) initially seems daunting, and suggests only a na\"ive finite approximation.  But we describe a fast, exact \Polya-Gamma simulation method that avoids the difficulties that can result from truncating an infinite sum.  The method, which is implemented in the R package \verb|BayesLogit| \citep{bayeslogit:2013}, is an accept/reject sampler based on the alternating-series method of \citet{devroye:1986}.  For the basic $\mbox{PG}(1,c)$ case, the sampler is very efficient: it requires only exponential and inverse-Gaussian draws, and the probability of accepting a proposed draw is uniformly bounded below at 0.99919.  The method is also fully automatic, with no tuning needed to get optimal performance.  It is therefore sufficiently fast and reliable to be used as a black-box sampling routine in complex hierarchical models involving the logit link.

Many previous approaches have been proposed for estimating Bayesian logistic regression models.  This includes the Metropolis--Hastings method, along with many other latent-variable schemes that facilitate Gibbs sampling, all described below.  Thus a major aim of our paper is to demonstrate the efficiency of the \Polya-Gamma approach versus these alternatives across a wide range of circumstances.  We present evidence in support of two claims.
\begin{enumerate}
\item In simple logit models with abundant data and no hierarchical structure, the \Polya-Gamma method is a close second to the independence Metropolis-Hastings (MH) sampler, as long as the MH proposal distribution is chosen carefully.
\item In virtually all other cases, the \Polya-Gamma method is most efficient.
\end{enumerate}
The one exception we have encountered to the second claim is the case of a negative-binomial regression model with many counts per observation, and with no hierarchical structure in the prior.  Here, the effective sample size of the \Polya-Gamma method remains the best, but its effective sampling \textit{rate} suffers.  As we describe below, this happens because our present method for sampling $\PG(n,c)$ is to sum $n$ independent draws from $\PG(1,c)$; with large counts, this becomes a bottleneck.  In such cases, the method of \citet{fruhwirth-schnatter-etal-2009} provides a fast approximation, at the cost of introducing a more complex latent-variable structure.

This caveat notwithstanding, the \Polya-Gamma scheme offers real advantages, both in speed and simplicity, across a wide variety of structured Bayesian models for binary and count data.  In general, the more complex the model, and the more time that one must spend sampling its main parameters, the larger will be the efficiency advantage of the new method.  The difference is especially large for the Gaussian-process spatial models we consider below, which require expensive matrix operations.  We have also made progress in improving the speed of the \Polya-Gamma sampler for large shape parameters, beyond the method described in Section 4. These modifications lead to better performance in negative-binomial models with large counts.  They are detailed in \citet{windle:polson:scott:2013}, and have been incorporated into the latest version of our R package \citep{bayeslogit:2013}.

Furthermore, in a recent paper based on an early technical report of our method, \citet{choi:hobert:2013} have proven that the \Polya-Gamma Gibbs sampler for Bayesian logistic regression is uniformly ergodic.  This result has important practical consequences; most notably, it guarantees the existence of a central limit theorem for Monte Carlo averages of posterior draws.   We are aware of no similar result for any other MCMC-based approach to the Bayesian logit model.  Together with the numerical evidence we present here, this provides a strong reason to favor the routine use of the \Polya-Gamma method.

The paper proceeds as follows.  The \Polya-Gamma distribution is constructed in Section 2, and used to derive a data-augmentation scheme for binomial likelihoods in Section 3.  Section 4 describes a method for simulating from the \Polya-Gamma distribution, which we have implemented as a stand-alone sampler in the \verb|BayesLogit| R package.  Section 5 presents the results of an extensive benchmarking study comparing the efficiency of our method to other data-augmentation schemes.  Section \ref{sec:extensions} concludes with a discussion of some open issues related to our proposal.  Many further details of the sampling algorithm and our empirical study of its efficiency are deferred to a technical supplement.

\section{The \Polya-Gamma distribution}

\label{sec:polyagammadetails}

\subsection{The case $\PG(b,0)$}

The key step in our approach is the construction of the \Polya-Gamma
distribution.  We now describe this new family, deferring our method for simulating PG random variates to Section
\ref{sec:samplingPG}.

The \Polya-Gamma family of distributions, denoted $\PG(b,c)$, is a subset of the class
of infinite convolutions of gamma distributions.  We first focus on the $\PG(1,0)$ case, which is a carefully chosen element of the class of
infinite convolutions of exponentials, also know as \Polya{}
distributions \citep{bn:kent:sorensen:1982}.  The $\PG(1,0)$ distribution has Laplace transform
$\bbE\{\exp(- \omega t)\} = \cosh^{-1}(\sqrt{t/2})$.  Using this as a starting point, one may define the random variable $\omega \sim PG(b,0)$, $b > 0$, as the infinite
convolution of gamma distributions (hence the name \Polya-Gamma) that has
Laplace transform
\begin{equation}
\label{eqn:PGmgf1}
\bbE\{\exp(- \omega t)\} = \prod_{i=1}^t \Big(1 + \frac{t}{2 \pi^2
  (k-1/2)^2}\Big)^{-b} = \frac{1}{\cosh^b(\sqrt{t/2})} \, .
\end{equation}
The last equality is a consequence of the Weierstrass factorization theorem.  By inverting the Laplace transform, one finds that if $\omega \sim \PG(b,0)$, then it is equal in distribution to an infinite sum of gammas:
\begin{eqnarray*}
  \omega &\stackrel{D}{=}&  \frac{1}{2 \pi^2} \sum_{k=1}^{\infty} \frac{g_k }{(k-1/2)^2} \, ,
\end{eqnarray*}
where the $g_k \sim \mbox{Ga}(b,1)$ are mutually independent.

The $\PG(b,0)$ class of distributions is closely related to a subset of
distributions that are surveyed by
\citet{biane-etal-2001}.  This family of distributions, which we denote by $J^*(b)$, $b>0$, has close connections with the Jacobi Theta and Riemann Zeta functions, and with Brownian excursions.  Its Laplace transform is
\begin{equation}
\label{eqn:jacobi-def}
\bbE\{e^{-t \JJ(b)}\} = \cosh^{-b}(\sqrt{2t}) \, ,
\end{equation}
implying that $\PG(b,0) \stackrel{D}{=} \JJ(b) / 4$.

\subsection{The general $\PG(b,c)$ class}

The general $\PG(b,c)$ class arises through an exponential tilting of the
$\PG(b,0)$ density, much in the same way that a Gaussian likelihood combines with a Gamma prior for a precision.  Specifically, a $\PG(b,c)$ random variable has the probability density function
\begin{equation}
\label{eqn:polyagamma.generalpdf}
p(\omega \mid b, c) = \frac{\exp \left( -\frac{c^2}{2} \omega \right) p(\omega \mid b,0)}{ \E_{\omega} \left\{ \exp(-\frac{c^2}{2} \omega) \right\} } \, ,
\end{equation}
where $ p(\omega \mid b, 0)$ is the density of a $\PG(b,0)$ random variable.
The expectation in the denominator is taken with respect to the $\PG(b,0)$
distribution; it is thus $\cosh^{-b}(c/2)$ by (\ref{eqn:PGmgf1}), ensuring that
$p(\omega \mid b, c)$ is a valid density.

The Laplace transform of a $\PG(b,c)$ distribution may be calculated by
appealing to the Weierstrass factorization theorem again:
\begin{align}
\label{eqn:pg-mgf}
 \E_{\omega} \left\{ \exp\left( - \omega t \right) \right\} 
  &=  \frac{\cosh^{b} \left( \frac{c}{2} \right)}  {\cosh^{b} \left(
      \sqrt{\frac{c^2/2+t}{2}} \right) } \\
& = \prod_{k=1}^\infty \left( \frac{1+\frac{c^2/2}{2(k-1/2)^2\pi^2}}
  {1+\frac{c^2/2+t}{2(k-1/2)^2\pi^2}} \right)^{b} \notag \\
 & = \prod_{k=1}^\infty (1+ d_k^{-1} t )^{-b} \; , \quad  {\rm where} \;  d_k =
 2\left(k-\frac{1}{2}\right)^2\pi^2 + c^2/2 \; . \notag
\end{align}

Each term in the product is recognizable as the Laplace transform of a gamma
distribution.  We can therefore write a $\PG(b,c)$ as an infinite convolution of
gamma distributions,
\begin{align*}
\omega \stackrel{D}{=} \sum_{k=1}^\infty \frac{\Ga(b,1)}{d_k} = 
\frac{1}{2 \pi^2} \sum_{k=1}^\infty \frac{\Ga(b,1)}{(k-\frac{1}{2})^2  + c^2/(4 \pi^2) } \, ,
\end{align*}
which is the form given in Definition 1.

\subsection{Further properties}

The density of a \Polya-Gamma random variable can be expressed as an
alternating-sign sum of inverse-Gaussian densities.  This fact plays a crucial role in our method for simulating \Polya-Gamma draws.  From the characterization
of $\JJ(b)$ density given by \citet{biane-etal-2001}, we know that the $
PG(b,0)$ distribution has density
$$
f( x \mid b, 0) = \frac{2^{b-1}}{\Gamma(b)} \sum_{n=0}^\infty (-1)^n \frac{\Gamma (n+b)}{\Gamma(n+1)} 
 \frac{(2n+b)}{\sqrt{2\pi x^3}} e^{ - \frac{(2n+b)^2}{8 x} } \, .
$$
The density of $ PG(b,z)$ distribution is then computed by an exponential tilt and a renormalization:
$$
f( x \mid b, c ) =  \{\cosh^{b}(c/2)\} \frac{2^{b-1}}{\Gamma(b)} \sum_{n=0}^\infty (-1)^n \frac{\Gamma (n+b)}{\Gamma(n+1)} 
 \frac{(2n+b)}{\sqrt{2\pi x^3}} e^{ - \frac{(2n+b)^2}{8 x} - \frac{c^2}{2} x } \, .
$$
Notice that the normalizing constant is known directly from the Laplace
transform of a $\PG(b,0)$ random variable.

A further useful fact is that all finite moments of a \Polya-Gamma random
variable are available in closed form.  In particular, the expectation may be
calculated directly.  This allows the \Polya-Gamma scheme to be used in EM
algorithms, where the latent $\omega$'s will form a set of complete-data
sufficient statistics for the main parameter.  We arrive at this result by appealing to the Laplace transform of $\omega \sim \PG(b,c)$.
Differentiating (\ref{eqn:pg-mgf}) with respect to $t$, negating, and evaluating
at zero yields
\[
\bbE(\omega) = \frac{b}{2c} \tanh(c/2) = \frac{b}{2c} \left( \frac{e^c - 1}{1+e^c} 
 \right) \, .
\]

Lastly, the \Polya-Gamma class is closed under convolution for random variates
with the same scale (tilting) parameter.  If $\omega_1 \sim \PG(b_1, z)$ and $\omega_2
\sim \PG(b_2, z)$ are independent, then $\omega_1 + \omega_2 \sim PG(b_1 + b_2,
z)$.  This follows from the Laplace transform.  We will employ this property
later when constructing a \Polya-Gamma sampler.

\section{The data-augmentation strategy}

\subsection{Main result}

\label{subsec:proofpgmixture}

The \Polya-Gamma family has been carefully constructed to yield a simple Gibbs sampler for the Bayesian logistic-regression model. The two differences from the \citet{albert:chib:1993} method for probit regression are that the posterior distribution is a scale mixture, rather than location mixture, of Gaussians; and that Albert and Chib's truncated normals are replaced by \Polya-Gamma latent variables.

To fix notation: let $y_i$ be the number of successes, $n_i$ the number of trials, and $x_i = (x_{i1}, \ldots, x_{ip})$ the vector of regressors for observation $i \in \{1, \ldots, N\}$.  Let $y_i \sim \mbox{Binom}(n_i, 1/\{1+e^{-\psi_i}\})$, where $\psi_i = x_i^T \bbeta$ are the log odds of success.  Finally, let $\bbeta$ have a Gaussian prior, $\bbeta \sim \N(b, B)$.  To sample from the posterior distribution using the \Polya-Gamma method, simply iterate two steps:
\begin{eqnarray*}
(\omega_i \mid \bbeta) &\sim& \PG(n_i, x_i^T \bbeta) \\
(\bbeta \mid y, \omega) &\sim& \N(m_{\omega}, V_{\omega}) \, ,
\end{eqnarray*}
where
\begin{eqnarray*}
V_{\omega} &=& (X^T \Omega X + B^{-1})^{-1} \\
m_{\omega} &=& V_{\omega} (X^T \kappa  + B^{-1} b) \, ,
\end{eqnarray*}
where $\kappa = (y_1 - n_1/2, \ldots, y_N - n_N/2)$, and $\Omega$ is the diagonal matrix of $\omega_i$'s.

We now derive this sampler, beginning with a careful statement and proof of the integral identity mentioned in the introduction.

\begin{theorem}
\label{thm:PGmixture}
Let $p(\omega)$ denote the density of the random variable $\omega \sim \PG(b,0)$, $b>0$.  Then the following integral identity holds for all $a \in \mathbb{R}$:
\begin{equation}
\frac{(e^{\psi})^a}{(1+e^{\psi})^b} = 2^{-b} e^{\kappa \psi} \int_0^{\infty} e^{-\omega \psi^2 / 2} \  p(\omega) \ d \omega \, , \label{eqn:pgprior}
\end{equation}
where $\kappa = a - b/2$.

Moreover, the conditional distribution
$$
p(\omega \mid \psi) = \frac{e^{-\omega \psi^2 / 2} \  p(\omega) } {\int_0^{\infty} e^{-\omega \psi^2 / 2} \  p(\omega) \ d \omega} \, ,
$$
which arises in treating the integrand in (\ref{eqn:pgprior}) as an unnormalized joint density in $(\psi, \omega)$, is also in the \Polya-Gamma class: $(\omega \mid \psi) \sim \PG (b, \psi)$.
\end{theorem}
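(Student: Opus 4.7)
The plan is to reduce the left-hand side of (\ref{eqn:pgprior}) to an expression that matches the right-hand side factor by factor, and then to appeal to the Laplace-transform formula (\ref{eqn:PGmgf1}) for $\PG(b,0)$ that has already been derived in Section \ref{sec:polyagammadetails}. First I would use the elementary identity $1 + e^\psi = 2 e^{\psi/2} \cosh(\psi/2)$ to rewrite the denominator of the left-hand side as $(1+e^\psi)^b = 2^b e^{b\psi/2} \cosh^b(\psi/2)$. Dividing $(e^\psi)^a$ by this and collecting exponentials gives
\[
\frac{(e^\psi)^a}{(1+e^\psi)^b} = 2^{-b} e^{(a - b/2)\psi} \cosh^{-b}(\psi/2) = 2^{-b} e^{\kappa \psi} \cosh^{-b}(\psi/2),
\]
which already exposes the constants $2^{-b}$ and $e^{\kappa \psi}$ appearing on the right of (\ref{eqn:pgprior}).

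It then suffices to show that the residual factor $\cosh^{-b}(\psi/2)$ equals $\int_0^\infty e^{-\omega \psi^2/2}\, p(\omega)\, d\omega$. But this integral is by definition the Laplace transform of $\omega \sim \PG(b,0)$ evaluated at $t = \psi^2/2$, and by (\ref{eqn:PGmgf1}) that transform equals $\cosh^{-b}(\sqrt{t/2})$. Substituting $t = \psi^2/2$ and using that $\cosh$ is even gives $\cosh^{-b}(|\psi|/2) = \cosh^{-b}(\psi/2)$, and the integral identity follows.

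For the conditional distribution, I would simply match densities against the definition of the general $\PG(b,c)$ class. The conditional in the statement has numerator $e^{-\omega \psi^2/2}\, p(\omega)$ and denominator equal to the integral just evaluated, namely $\cosh^{-b}(\psi/2)$. This is precisely the form of the $\PG(b,c)$ density (\ref{eqn:polyagamma.generalpdf}) with $c = \psi$, since the normalizer $\E_\omega\{\exp(-c^2 \omega / 2)\}$ there equals $\cosh^{-b}(c/2)$ by (\ref{eqn:PGmgf1}) applied with $t = c^2/2$. Hence $(\omega \mid \psi) \sim \PG(b, \psi)$. There is no real obstacle in the proof: all the substantive work was done upstream in Section \ref{sec:polyagammadetails} by choosing the $\PG(b,0)$ Laplace transform to be $\cosh^{-b}(\sqrt{t/2})$, which is exactly what makes the identity snap together. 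The only mildly fiddly point is the evenness argument used to pass from $\sqrt{\psi^2/4}$ to $\psi/2$ when $\psi$ may be negative.
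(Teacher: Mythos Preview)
Your proposal is correct and follows essentially the same route as the paper's own proof: rewrite $(1+e^\psi)^b$ via $1+e^\psi = 2 e^{\psi/2}\cosh(\psi/2)$, identify the remaining $\cosh^{-b}(\psi/2)$ with the Laplace transform (\ref{eqn:PGmgf1}) of $\PG(b,0)$ at $t=\psi^2/2$, and then match the conditional density against (\ref{eqn:polyagamma.generalpdf}) with $c=\psi$. If anything, you are slightly more explicit than the paper in spelling out the hyperbolic identity and the evenness argument needed when $\psi<0$.
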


\begin{proof}
  Appealing to (\ref{eqn:PGmgf1}), we may
  write the lefthand side of (\ref{eqn:pgprior}) as
\begin{eqnarray*}
\frac{( e^{\psi})^{a} } {( 1 + e^{\psi})^{b} }  
&=&  \frac{  2^{-b} \exp \{ \kappa \psi \} } { \cosh^{b} (\psi/2)}  \\ \\
&=& 2^{-b} e^{\kappa \psi} \ \E_{\omega} \{ \exp(-\omega \psi^2/2 \} \, ,
\end{eqnarray*}
where the expectation is taken with respect to $\omega \sim \PG(b, 0)$, and
where $\kappa = a- b/2$.

Turn now to the conditional distribution
$$
p(\omega \mid \psi) = \frac{e^{-\omega \psi^2 / 2} \  p(\omega) } {\int_0^{\infty} e^{-\omega \psi^2 / 2} \  p(\omega) \ d \omega} \, ,
$$
where $p(\omega)$ is the density of the prior, $\PG(b,0)$.  This is of the same
form as (\ref{eqn:polyagamma.generalpdf}), with $\psi = c$.  Therefore $(\omega
\mid \psi) \sim \PG(b,\psi)$.

\end{proof}

To derive our Gibbs sampler, we appeal to Theorem \ref{thm:PGmixture} and write the likelihood contribution of observation $i$ as
\begin{eqnarray*}
L_i(\bbeta) &=& \frac{\{\exp(x_i^T \bbeta)\}^{y_i}}{1+\exp(x_i^T \bbeta)} \\
&\propto& \exp(\kappa_i x_i^T \bbeta) \ \int_0^{\infty} \exp\{-\omega_i (x_i^T \bbeta)^2 /2 \} \ p(\omega_i \mid n_i, 0) \, ,
\end{eqnarray*}
where $\kappa_i = y_i - n_i/2$, and where $p(\omega_i \mid n_i, 0)$ is the density of a \Polya-Gamma random variable with parameters $(n_i,0)$.

Combining the terms from all $n$ data points gives the following expression for the conditional posterior of $\bbeta$, given $\omega = (\omega_1, \ldots, \omega_N)$:
\begin{eqnarray*}
p(\bbeta \mid \omega, y) \propto p(\bbeta) \prod_{i=1}^N L_i(\bbeta \mid \omega_i) &=&  p(\bbeta) \prod_{i=1}^N  \exp\left\{\kappa_i x_i^T \bbeta -\omega_i (x_i^T \bbeta)^2 /2 \right\}  \\
&\propto& p(\bbeta)  \prod_{i=1}^N  \exp\left\{ \frac{\omega_i}{2} (x_i^T \bbeta - \kappa_i/\omega_i)^2 \right\} \\
&\propto& p(\bbeta)  \exp \left\{ -\frac{1}{2} (z - X \bbeta)^T \Omega (z - X \bbeta) \right\} \, ,
\end{eqnarray*}
where $z = (\kappa_1/\omega_1, \ldots, \kappa_n/\omega_N)$, and where $\Omega = \mbox{diag}(\omega_1, \ldots, \omega_N)$.  This is a conditionally Gaussian likelihood in $\bbeta$, with working responses $z$, design matrix $X$, and diagonal covariance matrix $\Omega^{-1}$.  Since the prior $p(\bbeta)$ is Gaussian, a simple linear-model calculation leads to the Gibbs sampler defined above.

\subsection{Existing data-augmentation schemes}

A comparison with the methods of \citet{holmes:held:2006} and
\citet{fruhwirth-schnatter-fruhwirth-2010} clarifies how the \Polya-Gamma method
differs from previous attempts at data augmentation. Both of these methods attempt to replicate the missing-data mechanism of \citet{albert:chib:1993}, where the outcomes $y_i$ are assumed to
be thresholded versions of an underlying continuous quantity $z_i$.  For simplicity, we assume that $n_i=1$ for all observations, and that $y_i$ is either 0 or 1.  Let
\begin{eqnarray}
y_i &=& 
\left\{
\begin{array}{r r}
1 \, , \quad z_i \geq 0 \nonumber \\
0 \, , \quad z_i < 0
\end{array}
\right. \\
z_i &=& x_i^T \bbeta + \epsilon_i \; , \quad \epsilon_i  \sim \Lo(1) \, ,\label{eqn:randutillogit}
\end{eqnarray}
where $\epsilon_i \sim \Lo(1)$ has a standard logistic distribution. Upon
marginalizing over the $z_i$, often called the latent utilities, the original binomial likelihood is recovered.

Although (\ref{eqn:randutillogit}) would initially seem to be a direct parallel with \citet{albert:chib:1993}, it does
not lead to an easy method for sampling from the posterior distribution of
$\bbeta$. This creates additional complications compared to the probit case.  The standard approach has been to add another layer of auxiliary variables to handle the logistic error model on the latent-utility scale.  One strategy is to represent the logistic distribution as a normal-scale mixture \citep{holmes:held:2006}:
\begin{eqnarray*}
(\epsilon_i \mid \phi_i)  &\sim& \N(0, \phi_i) \\
\phi_i &=& (2 \lambda_i^2) \; , \quad \lambda_i \sim \KS(1) \, ,
\end{eqnarray*}
where $\lambda_i$ has a Kolmogorov--Smirnov distribution
\citep{andrews:mallows:1974}.  Alternatively, one may approximate the logistic error term as
a discrete mixture of normals \citep{fruhwirth-schnatter-fruhwirth-2010}:
\begin{eqnarray*}
(\epsilon_i \mid \phi_i)  &\sim& \N(0, \phi_i) \\
\phi_i &\sim& \sum_{k=1}^K w_k \delta_{\phi^{(k)}} \, ,
\end{eqnarray*}
where $\delta_\phi$ indicates a Dirac measure at $\phi$. The weights $w_k$ and
the points $\phi^{(k)}$ in the discrete mixture are fixed for a given choice of
$K$ so that the Kullback--Leibler divergence from the true distribution of the
random utilities is minimized. \citet{fruhwirth-schnatter-fruhwirth-2010} find
that the choice of $K=10$ leads to a good approximation, and list the optimal
weights and variances for this choice.

In both cases, posterior sampling can be done in two blocks, sampling the complete conditional of $\bbeta$ in one block
and sampling the joint complete conditional of both layers of auxiliary
variables in the second block. The discrete mixture of normals is an approximation, but it outperforms the
scale mixture of normals in terms of effective sampling rate, as it is much faster.

One may also arrive at the hierarchy above by manipulating the random utility-derivation of \citet{mcfadden:1974}; this involves the difference of random
utilities, or ``dRUM,'' using the term of \cite{fruhwirth-schnatter-fruhwirth-2010}.
The dRUM representation is superior to the random utility approach explored in
\cite{fruhwirth-schnatter-fruhwirth-2007}. Further work by
\cite{fussl-etal-2011-slides} improves the approach for binomial logistic models. In this extension, one must use a table
of different weights and variances representing different normal mixtures, to
approximate a finite collection of type-III logistic distributions, and interpolate within this table to approximate the entire family.

Both \cite{albert:chib:1993} and \citet{obrien-dunson-2004} suggest another approximation: namely, the use of a Student-$t$ link function as a close substitute for the logistic link. But this also introduces a second layer of latent  variables, in that the Student-$t$ error model for $z_i$ is represented as a scale mixture of normals.

\begin{figure}[t]
\begin{center}
\includegraphics[width=4.5in]{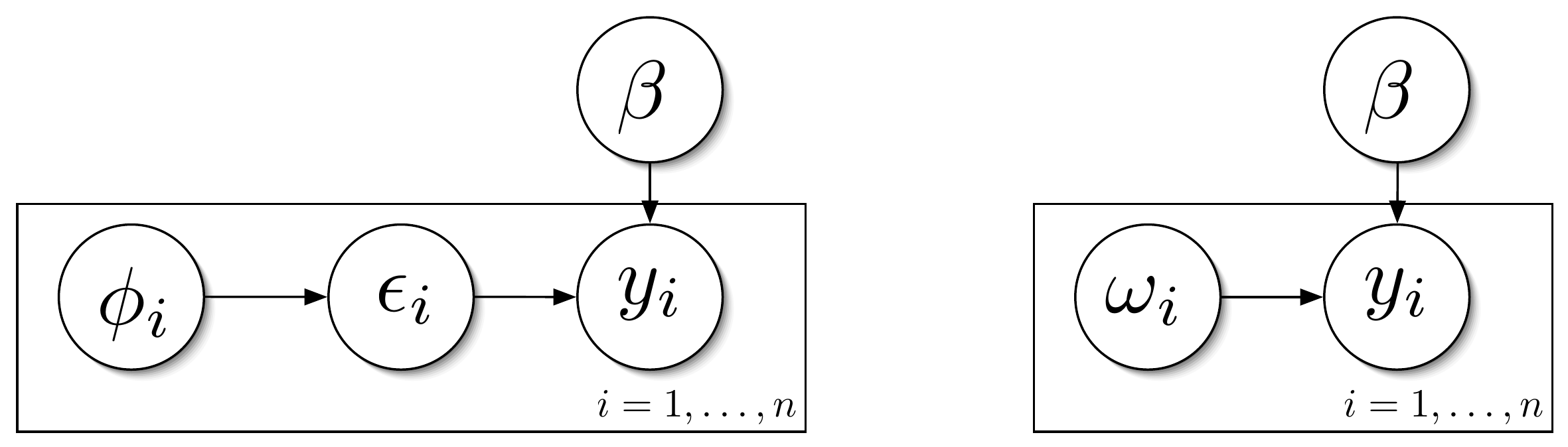}
\caption{\label{fig:logitDAG} Directed acyclic graphs depicting two
  latent-variable constructions for the logistic-regression model: the
  difference of random-utility model of \citet{holmes:held:2006} and
  \citet{fruhwirth-schnatter-fruhwirth-2010}, on the left; versus our direct
  data-augmentation scheme, on the right.}
\end{center}
\end{figure}

Our data-augmentation scheme differs from each of these approaches in several ways. First, it does not appeal directly to the random-utility interpretation of the logit model.  Instead, it represents the logistic CDF as a mixture with respect to an infinite convolution of gammas.  Second, the method is
exact, in the sense of making draws from the correct joint posterior
distribution, rather than an approximation to the posterior that arises out of an approximation to the link function. Third, like the \citet{albert:chib:1993} method, it requires
only a single layer of latent variables.

A similar approach to ours is that of \citet{gramacy:polson:2012}, who
propose a latent-variable representation of a powered-up version of the
logit likelihood \citep[c.f.][]{Polson:Scott:2011a}. This representation is useful for obtaining classical
penalized-likelihood estimates via simulation, but for the ordinary logit model it leads
to an improper mixing distribution for the latent variable. This requires
modifications of the basic approach that make simulation difficult in the
general logit case. As our experiments show, the method does not seem to be
competitive on speed grounds with the \Polya-Gamma representation, which results in a proper
mixing distribution for all common choices of $a_i,b_i$ in
(\ref{eqn:logitlikelihood-general}).

For negative-binomial regression, \citet{fruhwirth-schnatter-etal-2009} employ
the discrete-mixture/table-interpolation approach, like that used by
\citet{fussl-etal-2011-slides}, to produce a tractable data augmentation scheme.
In some instances, the \Polya-Gamma approach outperforms this method; in
others, it does not. The reasons for this discrepancy can be explained by
examining the inner workings of our \Polya-Gamma sampler, discussed in Section 4.

\subsection{Mixed model example}

We have introduced the \Polya-Gamma method in the context of a binary logit model.  We do this with the understanding that, when data are abundant, the Metropolis--Hastings algorithm with independent proposals will be efficient, as asymptotic theory suggests that a normal approximation to the posterior distribution will become very accurate as data accumulate.  This is well understood among Bayesian practitioners \citep[e.g.][]{carlin:1992,gelman:carlin:stern:rubin:2004}.

But the real advantage of data augmentation, and the \Polya-Gamma technique in particular, is that it becomes easy to construct and fit more complicated models.  For instance, the \Polya-Gamma method trivially accommodates mixed models, factor models, and models with a spatial or dynamic structure.  For most problems in this class, good Metropolis--Hastings samplers are difficult to design, and at the very least will require ad-hoc tuning to yield good performance.

\begin{figure}
\begin{center}
\includegraphics[width=6.0in]{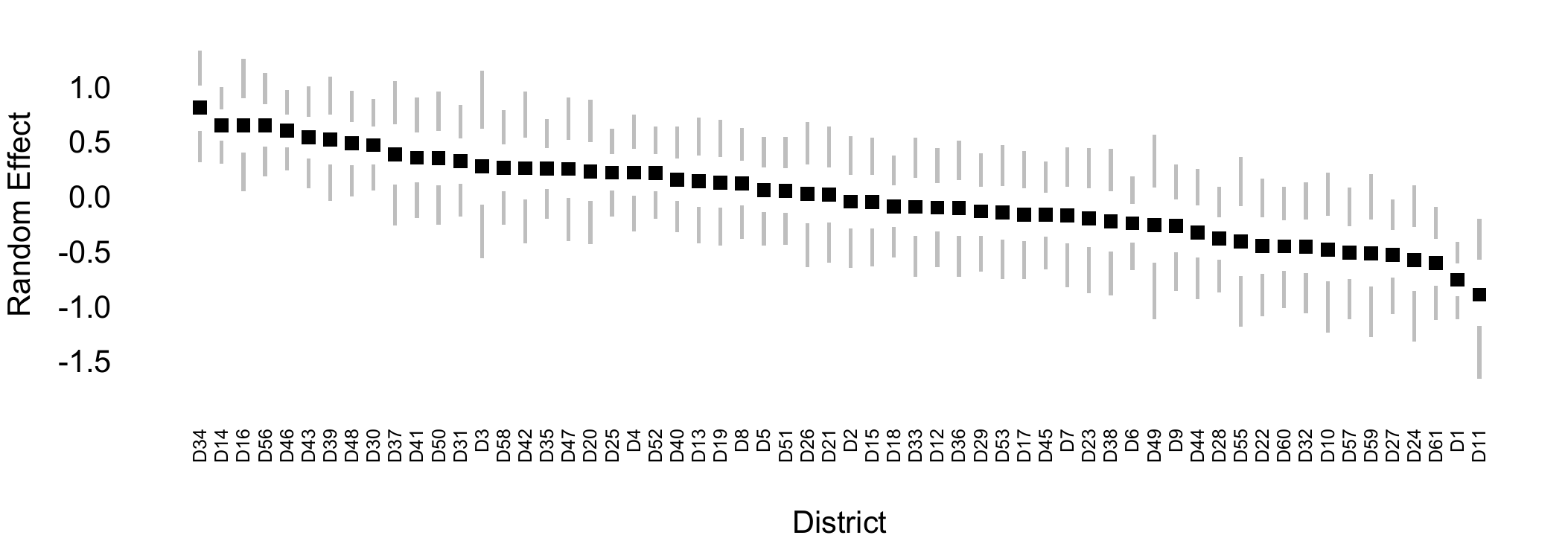}
\caption{\label{fig:mm-ranef} Marginal posterior distribution of random
  intercepts for each district found in a Bangladeshi contraception survey.  For 10,000 samples after 2,000
  burn-in, median ESS=8168 and median ESR=59.88 for the PG method.  Grey/white bars: 90\%/50\% posterior credible intervals.  Black dots: posterior means.}
\end{center}
\end{figure}

Several relevant examples are considered in Section 5.  But as an initial
illustration of the point, we fit a binomial logistic mixed model using the data
on contraceptive use among Bangladeshi women provided by the R package
\texttt{mlmRev} \citep{mlmrev-2011}.  The data comes from a Bangladeshi survey
whose predictors include a woman's age, the number of children at the time of
the survey, whether the woman lives in an urban or rural area, and a more
specific geographic identifier based upon the district in which the woman
resides.  Some districts have few observations and district 54 has no
observations; thus, a mixed model is necessary if one wants to include this
effect.  The response identifies contraceptive use.  We fit the mixed model
\begin{eqnarray*}
y_{ij} & \sim & \mbox{Binom}(1,p_{ij}) \; , \quad p_{ij}= \frac{e^{\psi_{ij}}}{1+e^{\psi_{ij}}},
\\
\psi_{ij} & = & m + \delta_j + x_{ij}' \beta, \\
\delta_j & \sim & N(0, 1/\phi), \\
m & \sim & N(0, \kappa^2 / \phi),
\end{eqnarray*}
where $i$ and $j$ correspond to the $i$th observation from the $j$th district.
The fixed effect $\beta$ is given a $N(0, 100 I)$ prior while the precision
parameter $\phi$ is given $\Ga(1,1)$ prior.  We take $\kappa \ra \infty$ to
recover an improper prior for the global intercept $m$.  Figure
\ref{fig:mm-ranef} shows the box plots of the posterior draws of the random
intercepts $m + \delta_j$.  If one does not shrink these random intercepts to a
global mean using a mixed model, then several take on unrealistic values due to
the unbalanced design.

We emphasize that there are many ways to model this data, and that we do not intend our analysis to be taken as definitive.  It is merely a proof of concept, showing how various aspects of Bayesian hierarchical modeling---in this case, models with both fixed and random effects---can be combined routinely with binomial likelihoods using the \Polya-Gamma scheme.  Together these changes require just a few lines of code and a few extra seconds of runtime compared to the non-hierarchical logit model.  A posterior draw of 2,000 samples for this data set takes 26.1 seconds for a binomial logistic
regression, versus 27.3 seconds for a binomial logistic mixed model.  As seen
in the negative binomial examples below, one may also painlessly incorporate a
more complex prior structure using the \Polya-Gamma technique.  For instance, if
given information about the geographic location of each district, one could place
spatial process prior upon the random offsets $\{\delta_j\}$.

\section{Simulating \Polya-Gamma random variables}
\label{sec:samplingPG}

\subsection{The PG(1,z) sampler}

All our developments thus far require an efficient method for sampling
\Polya-Gamma random variates.  In this section, we derive such a method, which
is implemented in the R package \texttt{BayesLogit}.  We focus chiefly on simulating PG(1,z) efficiently, as this is most relevant to the binary logit model.

First, observe that one may sample \Polya-Gamma random variables na\"ively (and approximately) using
the sum-of-gammas representation in Equation (\ref{eqn:PGdistribution1}).  But
this is slow, and involves the potentially dangerous step of truncating an
infinite sum.

We therefore construct an alternate, exact method by extending
the approach of \citet{devroye:2009} for simulating $\JJ(1)$ from
(\ref{eqn:jacobi-def}).  The distribution $\JJ(1)$ is related to the Jacobi
theta function, so we call $\JJ(1)$ the Jacobi distribution.  One may define an
exponentially tilted Jacobi distribution $\JJ(1,z)$ via the density
\begin{equation}
\label{eqn:jacobitilt} f(x \mid z) = \cosh(z) \ e^{-x z^2 /2} \ f(x) \, ,
\end{equation} 
where $f(x)$ is the density of $\JJ$(1).  The $PG(1,z)$ distribution is related
to $\JJ(1, z)$ through the rescaling
\begin{equation}
\label{eqn:pg-jacobi}
PG(1,z) = \frac{1}{4} \JJ(1, z/2).
\end{equation}

\citet{devroye:2009} develops an efficient $\JJ(1,0)$ sampler.  Following this
work, we develop an efficient sampler for an exponentially tilted $\JJ$ random variate,
$\JJ(1,z)$.  In both cases, the density of interest can be written as an
infinite, alternating sum that is amenable to the series
method described in Chapter IV.5 of \citet{devroye:1986}.  Recall that a random variable with density $f$ may be sampled by the accept/reject
algorithm by: (1) proposing $X$ from a density $g$; (2) drawing $U \sim \mcU(0,
cg(X))$ where $\|f/g\|_{\infty} \leq c$; and (3) accepting $X$ if $U \leq f(X)$
and rejecting $X$ otherwise.  When $f(x) = \sum_{n=0}^{\infty} (-1)^n a_n(x)$
and the coefficients $a_n(x)$ are decreasing for all $n \in \mathbb{N}_0$, for
fixed $x$ in the support of $f$, then the partial sums, $S_n(x) = \sum_{i=0}^n
(-1)^i a_i(x)$, satisfy
\begin{equation}
\label{eqn:partial-sum}
S_0(x) > S_2(x) > \cdots > f(x) > \cdots > S_3(x) > S_1(x).
\end{equation}
In that case, step (3) above is equivalent to accepting $X$ if $U \leq S_i(X)$ for some odd $i$, and rejecting $X$ if $U > S_i(X)$ for some even $i$.  Moreover, the partial sums $S_i(X)$ can be calculated iteratively.  Below we show that for
the $J^*(1,z)$ distribution the algorithm will accept with high probability upon
checking $U \leq S_1(X)$.

The Jacobi density has two alternating-sum representations,
$\sum_{n=0}^\infty (-1)^n a_n^{L}(x)$ and $\sum_{n=0}^\infty (-1)^n a_i^{R}(x)$,
neither of which satisfy (\ref{eqn:partial-sum}) for all $x$ in the support of
$f$.  However, each satisfies (\ref{eqn:partial-sum}) on an interval.  These two intervals, respectively denoted $I_L$ and $I_R$, satisfy $I_L \cup I_R = (0,\infty)$ and $I_L \cap I_R \neq \emptyset$.  Thus,
one may pick $t > 0$ and define the piecewise coefficients
\begin{numcases}{a_n(x) =}
\label{jacobi:dL}
\pi (n+1/2) \; \left(\frac{2}{\pi x}\right)^{3/2}
  \exp \left\{ - \frac{2(n+1/2)^2}{x} \right\} \, , & $0 < x \leq t$, \\
\label{jacobi:dR}
\pi (n+1/2) \;
  \exp \left\{ -\frac{(n+1/2)^2 \pi^2}{2} x \right\} \, , & $x > t$,
\end{numcases}
so that $f(x) = \sum_{n=0}^\infty (-1)^n a_n(x)$ satisfies the partial sum
criterion (\ref{eqn:partial-sum}) for $x > 0$.  Devroye shows that the best
choice of $t$ is near $0.64$.

Employing (\ref{eqn:jacobitilt}), we now see that the $\JJ(1,z)$ density can be
written as an infinite, alternating sum $f(x|z) = \sum_{n=0}^\infty (-1)^n
a_n(x|z)$, where
\[
a_n(x|z) = \cosh(z) \exp \left\{ - \frac{z^2 x}{2} \right\} a_n(x) \, .
\]
This satisfies (\ref{eqn:partial-sum}), as $a_{n+1}(x|z) / a_{n}(x|z) =
a_{n+1}(x) / a_n(x)$.  Since $a_0(x|z) \geq f(x|z)$, the first term of the series provides a natural
proposal:
\begin{equation}
\label{eqn:prop-kernel}
c(z) \, g(x|z) = \frac{\pi}{2} \cosh(z)
\begin{dcases}
\left( \frac{2}{\pi x} \right)^{3/2} 
\exp \left\{ - \frac{z^2 x}{2} - \frac{1}{2x} \right\} \, ,&  0 < x \leq t, \\
\exp \left\{ - \left(\frac{z^2}{2} + \frac{\pi^2}{8} \right) x \right\} \, , & x > t.
\end{dcases}
\end{equation}
Examining these two kernels, one finds that $X \sim g(x|z)$ may be sampled
from a mixture of an inverse-Gaussian and an exponential:
\[
X \sim
\begin{cases}
IG(|z|^{-1}, 1) \bbI_{(0,t]}& \textmd{ with prob. } p / (p+q) \\
Ex(-z^2/2+\pi^2/8) \bbI_{(t,\infty)} & \textmd{ with prob. } q / (p+q)
\end{cases}
\]
where $p(z) = \int_0^t c(z) \, g(x|z) dx$ and $q(z) = \int_t^\infty c(z) \,
g(x|z) dx$.  Note that we are implicitly suppressing the dependence of $p,q,c$,
and $g$ upon $t$.

With this proposal in hand, sampling $\JJ(1,z)$ proceeds as
follows:
\begin{enumerate}
\item Generate a proposal $X \sim g(x|z)$.
\item Generate $U \sim \mcU(0, c(z)g(X|z))$.
\item Iteratively calculate $S_n(X|z)$, starting at $S_1(X|z)$, until $U \leq
  S_n(X|z)$ for an odd $n$ or until $U > S_n(X|z)$ for an even $n$.
\item Accept $X$ if $n$ is odd; return to step 1 if $n$ is even.
\end{enumerate}
To sample $Y \sim PG(1,z)$, draw $X \sim J^*(1,z/2)$ and then let $Y = X / 4$.
The details of the implementation, along with pseudocode, can be found in the
technical supplement.

\subsection{Analysis of acceptance rate}

This $\JJ(1,z)$ sampler is very efficient. The parameter $c=c(z,t)$ found in
(\ref{eqn:prop-kernel}) characterizes the average number of proposals we expect
to make before accepting. Devroye shows that in the case of $z=0$, one can pick
$t$ so that $c(0,t)$ is near unity. We extend this result to non-zero tilting
parameters and calculate that, on average, the $\JJ(1,z)$ sampler rejects no
more than 9 out of every 10,000 draws, regardless of $z$.

\begin{proposition}
\label{prop:accept-rate}
Define
\[
p(z,t) = \int_0^t \frac{\pi}{2} \cosh(z) \exp \left\{ -\frac{z^2 x}{2} \right\} a_0^L(x) dx,
\]  
\[
q(z,t) = \int_t^\infty \frac{\pi}{2} \cosh(z) \exp \left\{ -\frac{z^2 x}{2} \right\} a_0^R(x) dx \, .
\]
The following facts about the \Polya-Gamma rejection sampler hold.
\begin{enumerate}
\item The best truncation point $t^*$ is independent of $z \geq 0$.  
\item For a fixed truncation point $t$, $p(z,t)$ and $q(z,t)$ are continuous,
  $p(z,t)$ decreases to zero as $z$ diverges, and $q(z,t)$ converges to $1$ as
  $z$ diverges.  Thus $c(z,t) = p(z,t) + q(z,t)$ is continuous and converges to $1$ as $z$
  diverges.
\item For fixed $t$, the average probability of accepting a draw, $1/c(z,t)$, is
  bounded below for all $z$.  For $t^*$, this bound to five digits is 0.99919, which is attained at $z \simeq 1.378$.
\end{enumerate}
\end{proposition}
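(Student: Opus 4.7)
\emph{Part 1 (the optimal $t^\star$ is $z$-free).}
The strategy is to differentiate $c(z,t) = p(z,t) + q(z,t)$ in $t$ and observe that the optimality condition is independent of $z$. By the fundamental theorem of calculus,
\[
\frac{\partial c}{\partial t}(z,t) = \frac{\pi}{2}\cosh(z)\,e^{-z^2 t/2}\bigl[a_0^L(t) - a_0^R(t)\bigr].
\]
The scalar prefactor is strictly positive for every $z$, so the critical-point condition $\partial c/\partial t = 0$ reduces to the purely geometric equation $a_0^L(t) = a_0^R(t)$, whose solution $t^\star$ involves no $z$. A sign check of $a_0^L - a_0^R$ (negative for small $t$, since $a_0^L(t)\to 0$ while $a_0^R(t)\to \pi/2$; positive for large $t$, since the polynomial left tail eventually dominates the exponential right tail) confirms this critical point is the unique minimum of $c(z,\cdot)$.

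\emph{Part 2 (continuity and the limit $c\to 1$).}
Continuity of $p(\cdot,t)$ and $q(\cdot,t)$ is a routine dominated-convergence argument: the integrands are jointly continuous in $(z,x)$ and on any compact $z$-set are dominated by integrable envelopes. For the asymptotic identification I would rewrite the left kernel in terms of an inverse-Gaussian density via the algebraic identity
\[
\frac{\pi}{2}\cosh(z)\left(\frac{2}{\pi x}\right)^{\!3/2}\!\exp\!\left(-\frac{z^2 x}{2} - \frac{1}{2x}\right) = 2\cosh(z)\,e^{-z}\,f_{\mathrm{IG}(1/z,1)}(x).
\]
The left-component integral over $(0,t]$ then equals $2\cosh(z)e^{-z}\cdot \Pr[\mathrm{IG}(1/z,1)\le t]$, whose limit as $z\to\infty$ is $1\cdot 1 = 1$, because $2\cosh(z)e^{-z}\to 1$ and the inverse-Gaussian mean $1/z$ collapses to zero. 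The right-component integral evaluates in closed form to $(\pi/2)\cosh(z)\,e^{-(z^2/2+\pi^2/8)t}/(z^2/2+\pi^2/8)$, which decays very rapidly. Adding the two contributions yields $c(z,t)\to 1$.

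\emph{Part 3 (lower bound on $1/c$ and the numerical constants).}
Since $c(\cdot,t)$ is continuous on $[0,\infty)$ and has a finite limit at infinity, it is bounded on $[0,\infty)$ and attains its supremum at some finite $z^\star(t)$; consequently $1/c(\cdot,t)$ is bounded below by $1/c(z^\star(t),t) > 0$. To pin down the specific value at $t = t^\star$, I would first solve the $z$-free equation of Part 1 numerically, recovering Devroye's constant $t^\star \approx 0.64$, and then set $\partial_z c(z,t^\star) = 0$. Using the closed forms from Part 2 this becomes a one-variable transcendental equation whose root (obtained by bisection or Newton iteration) lies near $z\simeq 1.378$; evaluating $1/c$ there yields $0.99919$ to five digits.

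The conceptual crux is the identity in Part 2 that exhibits the left kernel, up to the bounded scalar $2\cosh(z)e^{-z}$ that converges to $1$, as an exact inverse-Gaussian density; once this is recognized the two limits become transparent and the rest of the argument is standard continuity and compactness. The quoted five-digit constant in Part 3 then reduces to a careful one-dimensional numerical optimization, for which the inverse-Gaussian CDF at small means should be computed via a stable series or asymptotic expansion to avoid cancellation error.
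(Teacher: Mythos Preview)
Your approach is essentially the paper's: differentiate $c$ in $t$ to isolate the $z$-free condition $a_0^L(t)=a_0^R(t)$; evaluate the two pieces in closed form as inverse-Gaussian and exponential integrals; then invoke continuity and the limit $c\to 1$ for boundedness, finishing with a one-dimensional numerical search.

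A few points of comparison are worth noting. First, your sign analysis in Part~1 gives $t^\star$ as a \emph{minimum} of $c(z,\cdot)$; the paper's proof says ``maximum,'' which appears to be a slip (we want $c$ small, and your sign check $a_0^L-a_0^R<0$ then $>0$ is correct). Second, under the definitions as stated, $p$ is the left (IG) piece and $q$ the right (exponential) piece, so your computation gives $p\to 1$ and $q\to 0$---the reverse of the proposition's wording. The paper's own proof makes the same swap (it writes the exponential closed form for $p$ and the IG one for $q$), so this is a labeling inconsistency in the paper; the conclusion $c\to 1$ is unaffected. Third, the paper handles the IG-CDF limit via a Brownian first-passage argument (the hitting time of a drifted Brownian motion is monotone in the drift), while your ``mean $1/z\to 0$ so the mass collapses below $t$'' argument is more direct; either works. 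Finally, in Part~3 your claim that the supremum of $c$ is \emph{attained} at a finite $z^\star$ needs the extra fact $c(z,t)\ge 1$ (immediate from $f\le c\,g$ and $\int f=\int g=1$); continuity plus a finite limit alone does not guarantee attainment.
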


\begin{proof}
  We consider each point in turn.  Throughout, $t$ is assumed to be in the
  interval of valid truncation points, $I_L \cap I_R$.

\begin{enumerate}

\item We need to show that for fixed $z$, $c(z,t) = p(z,t) + q(z,t)$ has a
  maximum in $t$ that is independent of $z$.  For fixed $z \geq 0$, $p(z,t)$ and
  $q(z,t)$ are both differentiable in $t$.  Thus any extrema of $c$ will occur
  on the boundary of the interval $I_L \cap I_R$, or at the critical points for
  which $\frac{\del c}{\del t} = 0$; that is, $t \in I_L \cap I_R$, for which
  \[
  \cosh(z) \exp \Big\{ - \frac{z^2}{2} t \Big\} [a_0^L(t) - a_0^R(t)] = 0.
  \]
  The exponential term is never zero, so an interior critical point must satisfy
  $a_0^L(t) - a_0^R(t) = 0$, which is independent of $z$.  Devroye shows there
  is one such critical point, $t^* \simeq 0.64$, and that it corresponds to a
  maximum.

\item Both $p$ and $q$ are integrals of recognizable kernels.  Rewriting the
  expressions in terms of the corresponding densities and integrating yields
  \[
  p(z,t) = \cosh(z) \frac{\pi}{2} \frac{1}{y(z)} \exp \Big\{ - y(z) t \Big\},
  \;\; \;
  y(z) = \frac{z^2}{2} + \frac{\pi^2}{8},
  \]
  and
  \[
  q(z,t) = (1+e^{-2z}) \Phi_{IG}(t | 1/z, 1) \, ,
  \]
  where $\Phi_{IG}$ is the cumulative distribution function of an $IG(1/z,1)$
  distribution.

  One can see that $p(z,t)$ is eventually decreasing in $z$ for fixed $t$ by
  noting that the sign of $\frac{\del p}{\del z}$ is determined by
  \[
  \tanh(z) - \frac{z}{\frac{z^2}{2} + \frac{\pi^2}{8}} - z t\, ,
  \]
  which is eventually negative.  (In fact, for the $t^*$ calculated above it
  appears to be negative for all $z \geq 0$, which we do not prove here.)
  Further, $p(z,t)$ is continuous in $z$ and converges to $0$ as $z$ diverges.
  
  To see that $q(z,t)$ converges to $1$, consider a Brownian motion $(W_s)$
  defined on the probability space $(\Omega, \mcF, \bbP)$ and the subsequent
  Brownian motion with drift $X_s^{z} = z s + W_s$.  The stopping time
  \(
  T^{z} = \inf \{ s > 0 | X_s^{z} \geq 1 \}
  \)
  is distributed as $IG(1/z, 1)$ and 
  \(
  \bbP( T^{z} < t ) = \bbP( \max_{s \in [0,t]} X_s^z \geq 1 ) \, .
  \)

  Hence $\bbP(T^z < t)$ is increasing and $\lim_{z \ra \infty} \bbP(T^z < t) =
  1$, ensuring that $q(z,t) \propto (1+e^{-2z}) \bbP(T^z < t)$ converges to 1 as
  $z \ra \infty$ as well.  Continuity follows by considering the cumulative
  distribution $\bbP(T^{z} < t) = \Phi\{(zt-1)/\sqrt{t}\} - \exp(2 z t)
  \Phi\{(-1-zt)/\sqrt{t}\}$, which is a composition of continuous functions in
  $z$.

  By the continuity and tail behavior of $p$ and $q$, it follows that $c(z,t) =
  p(z,t) + q(z,t)$, for fixed $t$, is continuous for all $z$ and converges to
  $1$ as $z$ diverges.  Further $c(z,t) \geq 1$ since the target density and
  proposal density satisfy $f(x|z) \leq c(z,t) g(x|z)$ for all $x \geq 0$.
  Thus, $c$ takes on its maximum over $z$.

\item Since, for each $t$, $c(z,t)$ is bounded above in $z$, we know that
  $1/c(z,t)$ is bounded below above zero.  For $t^*$, we numerically calculate
  that $1/c(z,t^*)$ attains its minimum $0.9991977$ at $z \simeq 1.378$; thus,
  $1 / c(z,t^*) > 0.99919$ suggesting that no more than 9 of every 10,000 draws
  are rejected on average.
  
\end{enumerate}

\end{proof}

Since $t^*$ is the best truncation point regardless of $z$, we will assume that
the truncation point has been fixed at $t^*$ and suppress it from the notation.

\subsection{Analysis of tail probabilities}

Proposition \ref{prop:accept-rate} tells us that the sampler rarely rejects a
proposal.  One possible worry, however, is that the algorithm might calculate many terms in the sum before deciding to accept or reject, and that the sampler would be slow despite rarely rejecting.

Happily, this is not the case, as we now prove.  Suppose one samples $X \sim J^*(1,z)$.  Let $N$ denote the total number of
proposals made before accepting, and let $L_n$ be the number of partial sums
$S_i$ ($i=1, \ldots, L_n$) that are calculated before deciding to accept or reject
proposal $n \leq N$.  A variant of theorem 5.1 from \cite{devroye:1986} employs
Wald's equation to show that that
\(
\bbE[\sum_{n=1}^N L_n] = \sum_{i=0}^\infty \int_{0}^\infty a_i(x|z) dx.
\)
For the worst enclosing envelope, $z \simeq 1.378$, $\bbE[N] = 1.0016$; that is,
on average, one rarely calculates anything beyond $S_1$ of the first proposal.
A slight alteration of this theorem gives a more precise sense of how many terms
in the partial sum must be calculated.

\begin{proposition}
  When sampling $X \sim J^*(1,z)$, the probability of deciding to accept or
  reject upon checking the $n$th partial sum $S_n$, $n \geq 1$, is
\[
\frac{1}{c(z)} \int_0^\infty \left\{ a_{n-1}(x|z) - a_n(x|z) \right\} \ dx.
\]
\end{proposition}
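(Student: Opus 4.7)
The plan is to condition on the proposal $X$, determine the Lebesgue measure of the set of auxiliary-uniform values $U$ for which the algorithm terminates at exactly step $n$, and then integrate this conditional probability against the proposal density. The crucial enabling observation is the identification $c(z)\, g(x|z) = a_0(x|z)$, which is immediate by comparing the mixture kernel in \eqref{eqn:prop-kernel} with $a_0(x|z) = \cosh(z)\, e^{-z^2 x/2} a_0(x)$ evaluated piecewise. Consequently, conditional on $X$, $U \sim \mcU(0, a_0(X|z))$.

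Next I would prove by induction on $n \geq 0$ that, conditional on $X$, the set of $U$-values that remain undecided after checking $S_1(X|z), \ldots, S_n(X|z)$ is a single interval of Lebesgue measure $a_n(X|z)$. The base case $n=0$ is the full range $(0, a_0(X|z)]$. For the induction step, recall the alternating ordering $S_1 < S_3 < \cdots < f < \cdots < S_2 < S_0$ established in (\ref{eqn:partial-sum}), from which one checks that for odd $n$ the undecided region collapses from $(S_{n-2}, S_{n-1}]$ to $(S_n, S_{n-1}]$ (the decision rule accepts when $U \leq S_n$), while for even $n$ it collapses from $(S_{n-1}, S_{n-2}]$ to $(S_{n-1}, S_n]$ (the rule rejects when $U > S_n$). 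In both cases the new interval has length $|S_n - S_{n-1}| = a_n(X|z)$, completing the induction.

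It follows that the event ``decision occurs at step $n$'' is exactly the part of $(0, a_0(X|z)]$ that was undecided after step $n-1$ but is decided after step $n$; by the above, this has Lebesgue measure $a_{n-1}(X|z) - a_n(X|z)$. Since $U$ is uniform on an interval of length $a_0(X|z)$,
\[
\bbP(\text{decide at step } n \mid X) = \frac{a_{n-1}(X|z) - a_n(X|z)}{a_0(X|z)} .
\]
Integrating against the proposal density $g(x|z) = a_0(x|z)/c(z)$ yields
\[
\bbP(\text{decide at step } n) = \int_0^\infty \frac{a_0(x|z)}{c(z)} \cdot \frac{a_{n-1}(x|z) - a_n(x|z)}{a_0(x|z)}\, dx = \frac{1}{c(z)} \int_0^\infty \{a_{n-1}(x|z) - a_n(x|z)\}\, dx,
\]
which is the stated identity.

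The only delicate step is the inductive bookkeeping of the undetermined interval, in particular making sure the endpoints flip correctly between odd and even $n$ because the accept-check ($U \leq S_n$, odd $n$) and the reject-check ($U > S_n$, even $n$) remove opposite ends of the interval. Once the alternating ordering from (\ref{eqn:partial-sum}) is invoked, the algebra $S_{n-1} - S_n = \pm a_n(X|z)$ closes the induction with the correct length, and the remainder is just changing variables from $U$ to $X$.
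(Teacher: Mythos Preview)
Your proof is correct and follows essentially the same line as the paper's. The paper conditions on $X$, identifies the interval $K_n(x|z)$ of $U$-values for which the decision occurs at step $n$ (namely $(S_{n-2},S_n]$ for odd $n$ and $(S_n,S_{n-2}]$ for even $n$), reads off its length as $a_{n-1}(x|z)-a_n(x|z)$, divides by $a_0(x|z)$, and integrates against the proposal density; your inductive bookkeeping of the ``undecided'' interval is just a slightly more explicit way of arriving at the same length computation.
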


\begin{proof}
  Let $L$ denote the number of partials sums that are calculated before
  accepting or rejecting the proposal.  That is, a proposal, $X$, is generated;
  $U$ is drawn from $\mcU(0, a_0(X|z))$; and $L$ is the smallest natural number
  $n \in \bbN$ for which $U \leq S_n$ if $n$ is odd or $U > S_n$ if $n$ is even,
  where $S_n$ denotes $S_n(X|z)$.  But since $L$ is the smallest $n$ for which
  this holds, $S_{L-2} < U \leq S_L$ when $L$ is odd and $S_{L} < U \leq
  S_{L-2}$ when $L$ is even.  Thus, the algorithm accepts or rejects if and only
  if $U \in K_L(X|z)$ where
  \[
  K_n(x|z) = 
  \begin{cases}
    (S_{n-2}(x|z), S_n(x|z)], & \text{odd } n \\
    (S_n(x|z), S_{n-2}(x|z)], & \text{even } n.
  \end{cases}
  \]
  In either case, $|K_n(x|z)| = a_{n-1}(x|z) - a_n(x|z)$.  Thus
  \[
  \bbP(L = n | X=x) = \frac{a_{n-1}(x|z) - a_n(x|z)}{a_0(x|z)}.
  \]
  Marginalizing over $x$ yields
  \[
  \bbP(L=n) = \frac{1}{c(z)} \int_{0}^\infty \left\{ a_{n-1}(x|z) - a_n(x|z) \right\} dx.
  \]
\end{proof}

Since each coefficient $a_n$ is the piecewise composition of an inverse Gaussian
kernel and an exponential kernel, these integrals may be evaluated.  In
particular,
\[
a_n(x|z)
=
\cosh(z) 
\begin{cases}
  2 e^{-(2n+1) z} \; p_{IG}(x | \mu_n(z), \lambda_n), & x < t \\
  \pi \Big(n + \frac{1}{2} \Big) \frac{1}{y_n(z)} \; p_{\mcE}(x | y_n(z)), & x
  \geq t \, ,
\end{cases}
\] 
where $\mu_n(z) = \frac{2n+1}{z}$, $\lambda_n = (2n+1)^2$, $y_n(z) = 0.5 (z^2 +
(n+1/2)^2 \pi^2)$, and $p_{IG}$ and $p_\mcE$ are the corresponding densities.
The table below shows the first several probabilities for the worst case envelope, $z \simeq
1.378$.  Clearly $\bbP(L>n)$ decays rapidly with $n$.
\begin{center}
\begin{footnotesize}
\begin{tabular}{r cccc }
$n$           & 1           & 2           & 3           & 4 \\
\midrule
$\bbP(L>n)$   & $8.023 \times 10^{-4}$   & $1.728 \times 10^{-9}$   & $8.213 \times 10^{-18}$   & $8.066 \times 10^{-29}$ 
\end{tabular}
\end{footnotesize}
\end{center}
Together with Proposition 2, this provides a strong guarantee of the efficiency of the PG(1,z) sampler.

\subsection{The general PG(b, z) case}

To sample from the entire family of $\PG(b,z)$ distributions, we exploit the
additivity of the \Polya-Gamma class.  In particular, when $b \in \bbN$, one may
sample $\PG(b,z)$ by taking $b$ i.i.d.\ draws from $\PG(1,z)$ and summing them. In binomial
logistic regression, one will always sample $\PG(b,z)$ using integral $b$.  This will also be the case in negative-binomial regression if one chooses an integer over-dispersion parameter.  In the technical supplement, we discuss the case of non-integral $b$.

The run-time of the latent-variable sampling step is therefore roughly linear in the number of total counts in the data set.  For example, to sample 1 million \Polya-Gamma(1,1) random variables took $0.70$ seconds on a dual-core Apple laptop, versus $0.17$ seconds for the same number of Gamma random variables.  By contrast, to sample 1 million PG(10,1) random variables required 6.43 seconds, and to sample 1 million PG(100,1) random variables required 60.0 seconds.

We have had some initial success in developing a faster method to simulate from the PG(n,z) distribution that does not require summing together $n$ PG(1,z) draws, and that works for non-integer values of $n$.  This is an active subject of research, though somewhat beyond the scope of the present paper, where we use the sum-of-PG(1,z)'s method on all our benchmark examples.  A full report on the alternative simulation method for PG(n,z) may be found in \citet{windle:polson:scott:2013}. 

\section{Experiments}

We benchmarked the \Polya-Gamma method against several alternatives for logit and negative-binomial models.  Our purpose is to summarize the results presented in detail in our online technical supplement, to which we refer the interested reader. 

Our primary metrics of comparison are the effective sample size and the
effective sampling rate, defined as the effective sample size per second of
runtime.  The effective sampling rate quantifies how rapidly a Markov-chain
sampler can produce independent draws from the posterior distribution.
Following \cite{holmes:held:2006}, the effective sample size (ESS) for the $i$th
parameter in the model is
\begin{displaymath}
\label{ess-metric}
ESS_i = M / \{ 1 + 2 \sum_{j=1}^k \rho_i(j) \},
\end{displaymath}
where $M$ is the number of post-burn-in samples, and $\rho_i(j)$ is the $j$th
autocorrelation of $\beta_i$.  We use the \texttt{coda} package
\citep{coda-2006}, which fits an AR model to approximate the spectral density at
zero, to estimate each $ESS_i$. All of the benchmarks are generated using R so
that timings are comparable.  Some R code makes external calls to C.  In
particular, the \Polya-Gamma method calls a C routine to sample the \Polya-Gamma
random variates, just as R routines for sampling common distributions use
externally compiled code.  Here we report the median effective sample
size across all parameters in the model. Minimum and maximum effective sample
sizes are reported in the technical supplement.

Our numerical experiments support several conclusions.  

\paragraph{In binary logit models.}  First, the \Polya-Gamma is more efficient than all previously proposed data-augmentation schemes.  This is true both in terms of effective sample size and effective sampling rate.  Table \ref{tab:binarylogitsummary} summarizes the evidence: across 6 real and 2 simulated data sets, the \Polya-Gamma method was always more efficient than the next-best data-augmentation scheme (typically by a factor of 200\%--500\%).  This includes the approximate random-utility methods of \cite{obrien-dunson-2004} and \cite{fruhwirth-schnatter-fruhwirth-2010}, and the exact method of \cite{gramacy:polson:2012}.  \cite{fruhwirth-schnatter-fruhwirth-2010} find that their own method beats several other competitors, including the method of \cite{holmes:held:2006}.  We find this as well, and omit these timings from our comparison.  Further details can be found in Section 3 of the technical supplement.

\begin{table}
\begin{center}
\begin{small}
\caption{\label{tab:binarylogitsummary} \footnotesize Summary of experiments on real and simulated data for binary logistic regression.  ESS: the median effective sample size for an MCMC run of 10,000 samples.  ESR: the median effective sample rate, or median ESS divided by the runtime of the sampler in seconds.  AC: Australian credit data set.  GC1 and GC2: partial and full versions of the German credit data set.  Sim1 and Sim2: simulated data with orthogonal and correlated predictors, respectively.  Best RU-DA: the result of the best random-utility data-augmentation algorithm for that data set.  Best Metropolis: the result of the Metropolis algorithm with the most efficient proposal distribution among those tested.  See the technical supplement for full details.  }
\vspace{1pc}
\begin{tabular} {r r r r r r r r r r}
& & \multicolumn{8}{c}{Data set} \\
 &  & Nodal & Diab. & Heart & AC & GC1 & GC2 & Sim1 & Sim2 \\
 \midrule
ESS & \Polya-Gamma & 4860 & 5445 & 3527 & 3840 & 5893 & 5748 & 7692 & 2612 \\
 & Best RU-DA & 1645 & 2071 & 621 & 1044 & 2227 & 2153  & 3031 & 574 \\
 & Best Metropolis & 3609 & 5245 & 1076 & 415 & 3340 & 1050  & 4115 & 1388  \\
\midrule
ESR & \Polya-Gamma & 1632 & 964 & 634 & 300 & 383 & 258  & 2010 & 300  \\
 & Best RU-DA & 887 & 382 & 187 & 69 & 129 & 85  & 1042 & 59 \\
 & Best Metropolis & 2795 & 2524 & 544 & 122 & 933 & 223 & 2862 &  537
\end{tabular}
\end{small}
\end{center}
\end{table}

Second, the \Polya-Gamma method always had a higher effective sample size than
the two default Metropolis samplers we tried.  The first was a Gaussian proposal
using Laplace's approximation.  The second was a multivariate
$t_6$ proposal using Laplace's approximation to provide the centering and
scale-matrix parameters, recommended by \citet{rossi:allenby:mcculloch:2005} and
implemented in the R package \verb|bayesm| \citep{bayesm-2012}.

On 5 of the 8 data sets, the best Metropolis algorithm did have a higher effective sampling rate than the \Polya-Gamma method, due to the difference in run times.  But this advantage depends crucially on the proposal distribution, where even small perturbations can lead to surprisingly large declines in performance.  For example, on the Australian credit data set (labeled AC in the table), the Gaussian proposal led to a median effective sampling rate of 122 samples per second.  The very similar multivariate $t_6$ proposal led to far more rejected proposals, and gave an effective sampling rate of only 2.6 samples per second.  Diagnosing such differences for a specific problem may cost the user more time than is saved by a slightly faster sampler.

\begin{table}
\begin{center}
\begin{small}
\caption{\label{tab:mixedmodelsummary} \footnotesize Summary of experiments on real and simulated data for binary logistic mixed models.  Metropolis: the result of an independence Metropolis sampler based on the Laplace approximation.  Using a $t_6$ proposal yielded equally poor results.  See the technical supplement for full details.  }
\vspace{1pc}
\begin{tabular} {r r r r r}
& & \multicolumn{3}{c}{Data set} \\
 &  & Synthetic & Polls & Xerop \\
 \midrule
ESS & \Polya-Gamma & 6976 & 9194 & 3039  \\
 & Metropolis & 3675 & 53 & 3  \\
\midrule
ESR & \Polya-Gamma & 957 & 288 & 311  \\
 & Metropolis & 929 & 0.36 & 0.01 
\end{tabular}
\end{small}
\end{center}
\end{table}  

Finally, the \Polya-Gamma method truly shines when the model has a complex prior structure.  In general, it is difficult to design good Metropolis samplers for these problems.  For example, consider a binary logit mixed model with grouped data and a random-effects structure, where the log-odds of success for observation $j$ in group $i$ are $\psi_{ij} = \alpha_i + x_{ij} \beta_i$, and where either the $\alpha_i$, the $\beta_i$, or both receive further hyperpriors.  It is not clear that a good default
Metropolis sampler is easily constructed unless there are a large number of observations per group.  Table \ref{tab:mixedmodelsummary} shows the results of na\"ively using an independence Metropolis sampler based on the Laplace approximation to the full joint posterior.  For a synthetic data set with a balanced design of 100 observations per group, the \Polya-Gamma method is slightly better.  For the two real data sets with highly unbalanced designs, it is much better.

Of course, it is certainly possible to design and tune better Metropolis--Hastings samplers for mixed models; see, for example, \cite{gamerman-1997}.  We simply point out that what works well in the simplest case need not work well in a slightly more complicated case.  The advantages of the \Polya-Gamma method are that it requires no tuning, is simple to implement, is uniformly ergodic \citep{choi:hobert:2013}, and gives optimal or near-optimal performance across a range of cases.

\paragraph{In negative-binomial models.}

\begin{table}
\begin{center}
\begin{small}
\caption{\label{tab:nbmodelsummary} \footnotesize Summary of experiments on simulated data for negative-binomial models. Metropolis: the result of an independence Metropolis sampler based on a $t_6$ proposal. FS09: the algorithm of \cite{fruhwirth-schnatter-etal-2009}.  Sim1 and Sim2: simulated negative-binomial regression problems.  GP1 and GP2: simulated Gaussian-process spatial models.  The independence Metropolis algorithm is not applicable in the spatial models, where there as many parameters as observations.}
\vspace{1pc}
\begin{tabular} {r r r r r r}
& & \multicolumn{4}{c}{Data set} \\
 &  & Sim1 & Sim2 & GP1 & GP2 \\
 & Total Counts & 3244 & 9593 & 9137 & 22732 \\
 \midrule
ESS & \Polya-Gamma & 7646 & 3590 & 6309 & 6386  \\
	& FS09			& 719 & 915 & 1296 & 1157 \\
 & Metropolis & 749 & 764 & --- & ---   \\
\midrule
ESR & \Polya-Gamma & 285 & 52 & 62 & 3.16 \\
	& FS09			& 86 & 110 & 24 & 0.62 \\
 & Metropolis 			& 73 & 87 & --- & --- 
\end{tabular}
\end{small}
\end{center}
\end{table}  

The \Polya-Gamma method consistently yields the best effective sample sizes in
negative-binomial regression.  However, its effective sampling rate suffers when
working with a large counts or a non-integral over-dispersion
parameter.  Currently, our \Polya-Gamma sampler can draw from $\PG(b, \psi)$
quickly when $b=1$, but not for general, integral $b$: to sample from $\PG(b,
\psi)$ when $b \in \bbN$, we take $b$ independent samples of $\PG(1, \psi)$ and
sum them.  Thus in negative-binomial models, one must sample at least
$\sum_{i=1}^N y_i$ \Polya-Gamma random variates, where $y_i$ is the $i$th
response, at every MCMC iteration.  When the number of counts is relatively
high, this becomes a burden.  (The sampling method described in \citet{windle:polson:scott:2013} leads to better performance, but describing the alternative method is beyond the subject of this paper.)

The columns labeled Sim1 and Sim2 of Table \ref{tab:nbmodelsummary} show results
for data simulated from a negative-binomial model with 400 observations and 3
regressors.  (See the technical supplement for details.)  In the first case
(Sim1), the intercept is chosen so that the average outcome is a count of 8
(3244 total counts). Given the small average count size, the \Polya-Gamma method
has a superior effective sampling rate compared to the approximate method of
\cite{fruhwirth-schnatter-etal-2009}, the next-best choice.  In the second case
(Sim2), the average outcome is a count of 24 (9593 total counts).  Here the
\citeauthor{fruhwirth-schnatter-etal-2009} algorithm finishes more quickly, and
therefore has a better effective sampling rate.  In both cases we restrict the sampler to integer over-dispersion parameters.

As before, the \Polya-Gamma method starts to shine when working with more complicated
hierarchical models that devote proportionally less time to sampling the auxiliary variables.
For instance, consider a spatial model where we observe counts $y_1, \ldots, y_n$ at locations $x_1, \ldots, x_n$, respectively.  It is natural to model the log rate parameter as a Gaussian process:
$$
y_i \sim NB(n, 1/\{1+e^{-\psi_i}\}) \; , \quad
\psi  \sim GP(0, K) \, ,
$$
where $\psi = (\psi_1, \ldots, \psi_n)^T$ and $K$ is constructed by evaluating a covariance kernel at the locations $x_i$.  For example, under the squared-exponential kernel, we have
\[
K_{ij} = \kappa + \exp \left\{ \frac{d(x_i, x_j)^2}{2 \ell^2} \right\},
\]
with characteristic length scale $\ell$, nugget $\kappa$, and distance function $d$ (in our examples, Euclidean distance).

Using either the \Polya-Gamma or the \cite{fruhwirth-schnatter-etal-2009} techniques, one arrives at a multivariate Gaussian conditional for $\psi$ whose covariance matrix involves latent variables.  Producing a random variate from this distribution is expensive, as one must calculate the Cholesky decomposition
of a relatively large matrix at each iteration.  Therefore, the overall sampler spends relatively less time drawing auxiliary variables.  Since the \Polya-Gamma method leads to a higher effective sample size, it wastes fewer of the expensive draws for the main parameter.

The columns labeled GP1 and GP2 of Table \ref{tab:nbmodelsummary} show two such examples.  In the first synthetic data set, 256 equally spaced $x$ points were used to generate a draw for $\psi$ from a Gaussian process with length scale $\ell = 0.1$ and nugget $\kappa=0.0$.  The average count was $\bar y = 35.7$, or 9137 total counts (roughly the same as in the second regression example, Sim2).  In the second synthetic data set,  we simulated $\psi$ from a Gaussian process over 1000 $x$ points, with length scale $\ell = 0.1$ and a nugget $=0.0001$.  This yielded 22,720 total counts.  In both cases, the \Polya-Gamma method led to a more efficient sampler---by a factor of 3 for the smaller problem, and 5 for the larger.

\section{Discussion}
\label{sec:extensions}

We have shown that Bayesian inference for logistic models can be implemented using a data augmentation scheme
based on the novel class of \Polya-Gamma distributions. This leads to simple Gibbs-sampling algorithms for posterior computation that exploit standard normal linear-model theory, and that are notably simpler than previous schemes.  We have also constructed an accept/reject sampler for the new family, with strong guarantees of efficiency (Propositions 2 and 3).

The evidence suggests that our data-augmentation scheme is the best current method for fitting complex Bayesian hierarchical models with binomial likelihoods.  It also opens the door for exact Bayesian treatments of many modern-day machine-learning classification methods based on mixtures of logits \citep[e.g.][]{salakhutdinov:etal:2007,blei:lafferty:2007}.  Applying the \Polya-Gamma mixture framework to such problems is currently an active area of research.

Moreover, posterior updating via exponential tilting is a quite general situation that arises in Bayesian inference incorporating latent variables.  In our case, the posterior distribution of $ \omega $ that arises under normal pseudo-data with precision $\omega$ and a $\PG(b,0)$ prior is precisely an exponentially titled $\PG(b,0)$ random variable.  This led to our characterization of the general $\PG(b, c)$ class.   An interesting fact is that we were able to identify the conditional posterior for the latent variable 
strictly using its moment-generating function, without ever appealing to Bayes' rule for
density functions.  This follows the L\'evy-penalty framework of \citet{Polson:Scott:2010b}
and relates to work by \citet{ciesielski:taylor:1962} on the sojourn times of Brownian motion.  There may be many other situations where the same idea is applicable.

Our benchmarks have relied upon serial computation.  However,
one may trivially parallelize a vectorized \Polya-Gamma draw on a multicore CPU.
Devising such a sampler for a graphical-processing unit (GPU) is less straightforward, but potentially more
fruitful. The massively parallel nature of GPUs offer a solution to the
sluggishness found when sampling $\PG(n,z)$ variables for large, integral $n$,
which was the largest source of inefficiency with the negative-binomial results presented earlier.

\paragraph{Acknowledgements.} The authors wish to thank Hee Min Choi and Jim Hobert for sharing an early draft of their paper on the uniform ergodicity of the \Polya-Gamma Gibbs sampler.  They also wish to thank two anonymous referees, the associate editor, and the editor of the \textit{Journal of the American Statistical Association}, whose many insights and helpful suggestions have improved the paper.  The second author acknowledges the support of a CAREER grant from the U.S.~National Science Foundation (DMS-1255187).

\end{spacing}

\singlespace

\bibliographystyle{abbrvnat}
\bibliography{masterbib,rpackages}

\begin{thebibliography}{40}
\providecommand{\natexlab}[1]{#1}
\providecommand{\url}[1]{\texttt{#1}}
\expandafter\ifx\csname urlstyle\endcsname\relax
  \providecommand{\doi}[1]{doi: #1}\else
  \providecommand{\doi}{doi: \begingroup \urlstyle{rm}\Url}\fi

\bibitem[Albert and Chib(1993)]{albert:chib:1993}
J.~H. Albert and S.~Chib.
\newblock Bayesian analysis of binary and polychotomous response data.
\newblock \emph{Journal of the American Statistical Association}, 88\penalty0
  (422):\penalty0 669--79, 1993.

\bibitem[Andrews and Mallows(1974)]{andrews:mallows:1974}
D.~Andrews and C.~Mallows.
\newblock Scale mixtures of normal distributions.
\newblock \emph{Journal of the Royal Statistical Society, Series B},
  36:\penalty0 99--102, 1974.

\bibitem[Barndorff-Nielsen et~al.(1982)Barndorff-Nielsen, Kent, and
  Sorensen]{bn:kent:sorensen:1982}
O.~E. Barndorff-Nielsen, J.~Kent, and M.~Sorensen.
\newblock Normal variance-mean mixtures and z distributions.
\newblock \emph{International Statistical Review}, 50:\penalty0 145--59, 1982.

\bibitem[Bates et~al.(2011)Bates, Maechler, and Bolker]{mlmrev-2011}
D.~Bates, M.~Maechler, and B.~Bolker.
\newblock \emph{mlm{R}ev: Examples from Multilevel Modelling Software Review},
  2011.
\newblock URL \url{http://CRAN.R-project.org/package=mlmRev}.
\newblock R package version 1.0-1.

\bibitem[Biane et~al.(2001)Biane, Pitman, and Yor]{biane-etal-2001}
P.~Biane, J.~Pitman, and M.~Yor.
\newblock Probability laws related to the {J}acobi theta and {R}iemann zeta
  functions, and {B}rownian excursions.
\newblock \emph{Bulletin of the American Mathematical Society}, 38:\penalty0
  435--465, 2001.

\bibitem[Blei and Lafferty(2007)]{blei:lafferty:2007}
D.~M. Blei and J.~Lafferty.
\newblock A correlated topic model of \textit{Science}.
\newblock \emph{The Annals of Applied Statistics}, 1\penalty0 (1):\penalty0
  17--35, 2007.

\bibitem[Canty and Ripley(2012)]{boot-2012}
A.~Canty and B.~Ripley.
\newblock \emph{boot: Bootstrap {R} ({S}-Plus) Functions}, 1.3-4 edition, 2012.

\bibitem[Carlin(1992)]{carlin:1992}
J.~Carlin.
\newblock Meta-analysis for $2 \times 2$ tables: a {B}ayesian approach.
\newblock \emph{Statistics in Medicine}, 11\penalty0 (2):\penalty0 141--58,
  1992.

\bibitem[Choi and Hobert(2013)]{choi:hobert:2013}
H.~M. Choi and J.~P. Hobert.
\newblock The {P}olya-gamma {G}ibbs sampler for {B}ayesian logistic regression
  is uniformly ergodic.
\newblock Technical report, University of Florida, 2013.

\bibitem[Chongsuvivatwong(2012)]{epicalc-2012}
V.~Chongsuvivatwong.
\newblock \emph{epicalc: Epidemiological calculator}, 2012.
\newblock URL \url{http://CRAN.R-project.org/package=epicalc}.
\newblock R package version 2.15.1.0.

\bibitem[Ciesielski and Taylor(1962)]{ciesielski:taylor:1962}
Z.~Ciesielski and S.~J. Taylor.
\newblock First passage times and sojourn times for {B}rownian motion in space
  and the exact {H}ausdorff measure of the sample path.
\newblock \emph{Transactions of the American Mathematical Society},
  103\penalty0 (3):\penalty0 434--50, 1962.

\bibitem[Dawid(1981)]{dawid81}
A.~P. Dawid.
\newblock Some matrix-variate distribution theory: {N}otational considerations
  and a {B}ayesian application.
\newblock \emph{Biometrika}, 68:\penalty0 265--274, 1981.

\bibitem[Devroye(1986)]{devroye:1986}
L.~Devroye.
\newblock \emph{Non-uniform random variate generation}.
\newblock Springer, 1986.

\bibitem[Devroye(2009)]{devroye:2009}
L.~Devroye.
\newblock On exact simulation algorithms for some distributions related to
  {J}acobi theta functions.
\newblock \emph{Statistics \& Probability Letters}, 79\penalty0 (21):\penalty0
  2251--9, 2009.

\bibitem[Fr\"{u}hwirth-Schnatter and
  Fr\"{u}hwirth(2007)]{fruhwirth-schnatter-fruhwirth-2007}
S.~Fr\"{u}hwirth-Schnatter and R.~Fr\"{u}hwirth.
\newblock Auxiliary mixture sampling with applications to logistic models.
\newblock \emph{Computational Statistics and Data Analysis}, 51:\penalty0
  3509--3528, 2007.

\bibitem[Fr\"{u}hwirth-Schnatter and
  Fr\"{u}hwirth(2010)]{fruhwirth-schnatter-fruhwirth-2010}
S.~Fr\"{u}hwirth-Schnatter and R.~Fr\"{u}hwirth.
\newblock Data augmentation and mcmc for binary and multinomial logit models.
\newblock In \emph{Statistical Modelling and Regression Structures}, pages
  111--132. Springer-Verlag, 2010.
\newblock Available from UT library online.

\bibitem[Fr\"{u}hwirth-Schnatter et~al.(2009)Fr\"{u}hwirth-Schnatter,
  Fr\"{u}hwirth, Held, and Rue]{fruhwirth-schnatter-etal-2009}
S.~Fr\"{u}hwirth-Schnatter, R.~Fr\"{u}hwirth, L.~Held, and H.~Rue.
\newblock Improved auxiliary mixture sampling for hierarchical models of
  non-{G}aussian data.
\newblock \emph{Statistics and Computing}, 19:\penalty0 479--492, 2009.

\bibitem[Fussl(2012)]{binomlogit-2012}
A.~Fussl.
\newblock \emph{binomlogit: Efficient MCMC for Binomial Logit Models}, 2012.
\newblock URL \url{http://CRAN.R-project.org/package=binomlogit}.
\newblock R package version 1.0.

\bibitem[Fussl et~al.(2013)Fussl, Fr\"{u}hwirth-Schnatter, and
  Fr\"{u}hwirth]{fussl-etal-2011-slides}
A.~Fussl, S.~Fr\"{u}hwirth-Schnatter, and R.~Fr\"{u}hwirth.
\newblock Efficient {MCMC} for binomial logit models.
\newblock \emph{ACM Transactions on Modeling and Computer Simulation},
  22\penalty0 (3):\penalty0 1--21, 2013.

\bibitem[Gamerman(1997)]{gamerman-1997}
D.~Gamerman.
\newblock Sampling from the posterior distribution in generalized linear mixed
  models.
\newblock \emph{Statistics and Computing}, 7:\penalty0 57--68, 1997.

\bibitem[Gelman and Hill(2006)]{gelman-hill-2006}
A.~Gelman and J.~Hill.
\newblock \emph{Data Analysis Using Regression and Multilevel/Hierarchical
  Models}.
\newblock Cambridge University Press, 2006.

\bibitem[Gelman et~al.(2004)Gelman, Carlin, Stern, and
  Rubin]{gelman:carlin:stern:rubin:2004}
A.~Gelman, J.~Carlin, H.~Stern, and D.~Rubin.
\newblock \emph{Bayesian Data Analysis}.
\newblock Chapman and Hall/CRC, 2nd edition, 2004.

\bibitem[German(1987)]{glassdataset}
B.~German.
\newblock Glass identification dataset, 1987.
\newblock URL
  \url{http://archive.ics.uci.edu/ml/datasets/Glass+Identification}.

\bibitem[Gramacy(2012)]{reglogit-2012}
R.~B. Gramacy.
\newblock \emph{reglogit: Simulation-based Regularized Logistic Regression},
  2012.
\newblock URL \url{http://CRAN.R-project.org/package=reglogit}.
\newblock R package version 1.1.

\bibitem[Gramacy and Polson(2012)]{gramacy:polson:2012}
R.~B. Gramacy and N.~G. Polson.
\newblock Simulation-based regularized logistic regression.
\newblock \emph{Bayesian Analysis}, 7\penalty0 (3):\penalty0 567--90, 2012.

\bibitem[Holmes and Held(2006)]{holmes:held:2006}
C.~Holmes and L.~Held.
\newblock Bayesian auxiliary variable models for binary and multinomial
  regression.
\newblock \emph{Bayesian Analysis}, 1\penalty0 (1):\penalty0 145--68, 2006.

\bibitem[Jackman(2009)]{jackman:2009}
S.~Jackman.
\newblock \emph{Bayesian Analysis for the Social Sciences}.
\newblock John Wiley and Sons, 2009.

\bibitem[Leonard(1975)]{leonard:1975}
T.~Leonard.
\newblock Bayesian estimation methods for two-way contingency tables.
\newblock \emph{Journal of the Royal Statistical Society (Series B)},
  37\penalty0 (1):\penalty0 23--37, 1975.

\bibitem[Martin et~al.(2011)Martin, Quinn, and Park]{mcmcpack:2011}
A.~D. Martin, K.~M. Quinn, and J.~H. Park.
\newblock {MCMC}pack: {M}arkov chain {M}onte {C}arlo in r.
\newblock \emph{Journal of Statistical Software}, 42\penalty0 (9):\penalty0
  1--21, 2011.

\bibitem[McFadden(1974)]{mcfadden:1974}
P.~McFadden.
\newblock Conditional logit analysis of qualitative choice behavior.
\newblock In P.~Zarembka, editor, \emph{Frontiers of Econometrics}, pages
  105--42. Academic Press, 1974.

\bibitem[O'Brien and Dunson(2004)]{obrien-dunson-2004}
S.~M. O'Brien and D.~B. Dunson.
\newblock Bayesian multivariate logistic regression.
\newblock \emph{Biometrics}, 60:\penalty0 739--746, 2004.

\bibitem[Plummer et~al.(2006)Plummer, Best, Cowles, and Vines]{coda-2006}
M.~Plummer, N.~Best, K.~Cowles, and K.~Vines.
\newblock {CODA}: Convergence diagnosis and output analysis for {MCMC}.
\newblock \emph{R News}, 6\penalty0 (1):\penalty0 7--11, 2006.
\newblock URL \url{http://CRAN.R-project.org/doc/Rnews/}.

\bibitem[Polson and Scott(2012)]{Polson:Scott:2010b}
N.~G. Polson and J.~G. Scott.
\newblock Local shrinkage rules, {L}\'evy processes, and regularized
  regression.
\newblock \emph{Journal of the Royal Statistical Society (Series B)},
  74\penalty0 (2):\penalty0 287--311, 2012.

\bibitem[Polson and Scott(2013)]{Polson:Scott:2011a}
N.~G. Polson and J.~G. Scott.
\newblock Data augmentation for non-{G}aussian regression models using
  variance-mean mixtures.
\newblock \emph{Biometrika}, 100\penalty0 (2):\penalty0 459--71, 2013.

\bibitem[Rossi(2012)]{bayesm-2012}
P.~E. Rossi.
\newblock \emph{bayesm: Bayesian Inference for Marketing/Micro-econometrics},
  2012.
\newblock URL \url{http://CRAN.R-project.org/package=bayesm}.
\newblock R package version 2.2-5.

\bibitem[Rossi et~al.(2005)Rossi, Allenby, and
  McCulloch]{rossi:allenby:mcculloch:2005}
P.~E. Rossi, G.~M. Allenby, and R.~E. McCulloch.
\newblock \emph{Bayesian statistics and marketing}.
\newblock Wiley, 2005.

\bibitem[Salakhutdinov et~al.(2007)Salakhutdinov, Mnih, and
  Hinton]{salakhutdinov:etal:2007}
R.~Salakhutdinov, A.~Mnih, and G.~Hinton.
\newblock Restricted {B}oltzmann machines for collaborative filtering.
\newblock In \emph{Proceedings of the 24th Annual International Conference on
  Machine Learning}, pages 791--8, 2007.

\bibitem[Skene and Wakefield(1990)]{skene:wakefield:1990}
A.~Skene and J.~C. Wakefield.
\newblock Hierarchical models for multi-centre binary response studies.
\newblock \emph{Statistics in Medicine}, 9:\penalty0 919--29, 1990.

\bibitem[Windle et~al.(2013{\natexlab{a}})Windle, Polson, and
  Scott]{bayeslogit:2013}
J.~Windle, N.~G. Polson, and J.~G. Scott.
\newblock \emph{BayesLogit: {B}ayesian logistic regression},
  2013{\natexlab{a}}.
\newblock URL
  \url{http://cran.r-project.org/web/packages/BayesLogit/index.html}.
\newblock {R} package version 0.2-4.

\bibitem[Windle et~al.(2013{\natexlab{b}})Windle, Polson, and
  Scott]{windle:polson:scott:2013}
J.~Windle, N.~G. Polson, and J.~G. Scott.
\newblock Improved {P}\'olya-gamma sampling.
\newblock Technical report, University of Texas at Austin, 2013{\natexlab{b}}.

\end{thebibliography}

\newpage

\appendix

\renewcommand{\thesection}{S\arabic{section}}

{ \huge \noindent Technical Supplement}

\section{Details of \Polya-Gamma sampling algorithm}

\begin{figure}
\centering
\includegraphics[scale=0.5]{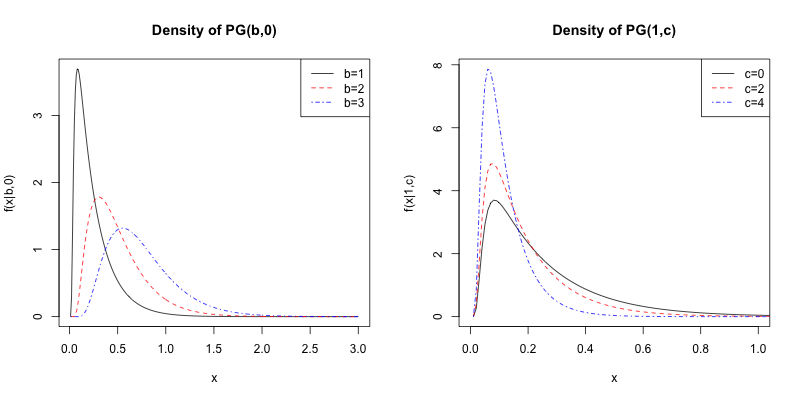}
\caption{\label{fig:pg-dens} Plots of the density of the \Polya-Gamma
  distribution $\PG(b,c)$ for various values of $b$ and $c$.  Note that the
  horizontal and vertical axes differ in each plot.}
\end{figure}


\begin{algorithm}[p]
\caption{\label{alg:tiltedpg} Sampling from $PG(1,z)$}
\begin{algorithmic}
\State \textbf{Input}: $z > 0$.

\State \textbf{Define}: \texttt{pigauss($t \mid \mu, \lambda$)}, the CDF of the
inverse Gaussian distribution

\State \textbf{Define}: $a_n(x)$, the piecewise-defined coefficients in (1) and (2).

\State $z \leftarrow |z| / 2$,
       $t \leftarrow 0.64$,
       $K \leftarrow \pi^2 / 8 + z^2 / 2$
\State $p \leftarrow \frac{\pi}{2 K} \exp(-K t)$

\State $q \leftarrow 2 \exp(-|z|) \; \texttt{pigauss}(t \mid \mu = 1/z, \lambda
= 1.0)$


\Repeat
\State Generate $U, V \sim \mathcal{U}(0,1)$
\If{$U < p / (p+q)$}
\State (Truncated Exponential)
\State $X \leftarrow t + E/K$ where $E \sim \mathcal{E}(1)$
\Else
\State (Truncated Inverse Gaussian)
\State $\mu \leftarrow 1/z$
\If {$\mu > t$}
\Repeat
\State Generate $1/X \sim \chi^2_1 \1_{(t,\infty)}$
\Until {$\mathcal{U}(0,1) < \exp(-\frac{z^2}{2} X)$}
\Else
\Repeat
\State Generate $X \sim \mathcal{IN}(\mu, 1.0)$
\Until {$X < t$}
\EndIf
\EndIf
\State $S \leftarrow a_0(X)$, $Y \leftarrow VS$, $n \leftarrow 0$
\Repeat
\State $n \leftarrow n + 1$
\If{$n$ is odd}
\State $S \leftarrow S - a_n(X)$; \textbf{if} $Y < S$, \textbf{then} \Return $X$
/ 4
\Else
\State $S \leftarrow S + a_n(X)$; \textbf{if} $Y > S$, \textbf{then} \textbf{break}
\EndIf
\Until {FALSE}
\Until {FALSE}
\end{algorithmic}
\end{algorithm}


\begin{algorithm}
\begin{algorithmic}
\State \textbf{Input}: $\mu, t > 0$.

\State Let $z = 1/\mu$.
\Repeat 
\Repeat
\State Generate $E, E' \sim \mcE(1)$.
\Until $E^2 \leq 2 E' / t$
\State $X \leftarrow t / (1 + tE)^2$
\State $\alpha \leftarrow \exp(\frac{-1}{2} z^2 X)$
\State $U \sim \mcU$
\Until $U \leq \alpha$
\end{algorithmic}
\caption{Algorithm used to generate $IG(\mu, 1) \1_{(0,t)}$ when
  $\mu > t$.}
\label{alg:ig1}
\end{algorithm}


\begin{algorithm}
\begin{algorithmic}
\State \textbf{Input}: $\mu, t > 0$.
\Repeat
\State $Y \sim N(0,1)^2$.
\State $X \leftarrow \mu + 0.5 \mu^2 Y - 0.5 \mu \sqrt{4 \mu Y + (\mu Y)^2}$
\State $U \sim \mcU$
\State \textbf{If} $(U > \mu / (\mu + X))$, then $X \leftarrow \mu^2 / X$.
\Until $X \leq R$.
\end{algorithmic}
\caption{Algorithm used to generate $IG(\mu, 1)  \1_{(0,t)}$ when
  $\mu \leq t$.}
\label{alg:ig2}
\end{algorithm}


Algorithm \ref{alg:tiltedpg} shows pseudo-code for sampling the \Polya-Gamma$(1,z)$ distribution.  Recall from the main manuscript that one may pick $t > 0$ and define the piecewise coefficients
\begin{numcases}{a_n(x) =}
\label{jacobi:dL}
\pi (n+1/2) \; \left(\frac{2}{\pi x}\right)^{3/2}
  \exp \left\{ - \frac{2(n+1/2)^2}{x} \right\} & $0 < x \leq t$, \\
\label{jacobi:dR}
\pi (n+1/2) \;
  \exp \left\{ -\frac{(n+1/2)^2 \pi^2}{2} x \right\} & $x > t$,
\end{numcases}
so that $f(x) = \sum_{n=0}^\infty (-1)^n a_n(x)$ satisfies the partial sum
criterion for $x > 0$.

To complete the analysis of the \Polya-Gamma sampler, we specify our method for
sampling truncated inverse Gaussian random variables, $IG(1/z,1) \bbI_{(0,t]}$.
When $z$ is small the inverse Gaussian distribution is approximately inverse
$\chi^2_1$, motivating an accept-reject algorithm.  When $z$ is large, most of
the inverse Gaussian distribution's mass will be below the truncation point $t$,
motivating a rejection algorithm.  Thus, we take a two pronged approach.

When $1/z > t$ we generate a truncated inverse-Gaussian random variate using
accept-reject sampling using the proposal distribution $ (1/\chi^2_1)
\bbI_{(t,\infty)}$.  The proposal $X$ is generated following
\citet{devroye:2009}.
Considering the ratio of the kernels, one finds that $P(\text{accept} | X = x) =
\exp ( - x z^2 / 2 )$.  
Since $z < 1/t$ and $X < t$ we may compute a lower bound
on the average rate of acceptance:
\[
\bbE \Big[ \exp \Big( \frac{-z^2}{2} X \Big) \Big] \geq \exp \frac{-1}{2t} =
0.61 \, .
\]
See algorithm (\ref{alg:ig1}) for pseudocode.

When $1/z \leq t$, we generate a truncated inverse-Gaussian random variate using
rejection sampling.  \citet{devroye:1986} (p.~149) describes how to sample from
an inverse-Gaussian distribution using a many-to-one transformation.  Sampling
$X$ in this fashion until $X < t$ yields an acceptance rate bounded below by
\[
\int_0^t IG(x | 1/z, \lambda=1) dx \geq \int_0^t IG(x | t, \lambda=1) = 0.67
\]
for all $1/z < t$.  See Algorithm \ref{alg:ig2} for pseudocode.

Recall that when $b$ is an integer, we draw $\PG(b,z)$ by summing $b$ i.i.d.~draws from $\PG(1,z)$.  When $b$ is not integral, the following simple approach often suffices.  Write $b = \floor{b} + e$, where $\floor{b}$ is the
integral part of $b$, and sum a draw from $\PG(\floor{b}, z)$, using the method
previously described, with a draw from $\PG(e, z)$, using the finite
sum-of-gammas approximation.  With 200 terms in the sum, we find that the
approximation is quite accurate for such small values of the first parameter, as
each $\mbox{Ga}(e,1)$ term in the sum tends to be small, and the weights in the
sum decay like $1/k^2$.  This, in contrast, may not be the case when using the
finite sum-of-gammas approximation for arbitrary $b$.

In \citet{windle:polson:scott:2013}, we describe a better method for handling large and/or non-integer shape parameters.  This method is implemented in the \verb|BayesLogit| R package \citep{bayeslogit:2013}.

\section{Benchmarks: overview}

We benchmark the \Polya-Gamma method against several alternatives for binary
logistic regression and negative binomial regression for count data to measure
its relative performance.  All of these benchmarks are empirical and hence some
caution is urged.  Our primary metric of comparison is the effective sampling
rate, which is the effective sample size per second and which quantifies how
quickly a sampler can produce independent draws from the posterior distribution.
However, this metric is sensitive to numerous idiosyncrasies relating to the
implementation of the routines, the language in which they are written, and the
hardware on which they are run.  We generate these benchmarks using R, though
some of the routines make calls to external C code.  The specifics of each
method are discussed in further detail below.  In general, we find that the
\Polya-Gamma technique compares favorably to other data augmentation methods.  Specifically, the \Polya-Gamma technique performs better than the methods of
\cite{obrien-dunson-2004}, \cite{gramacy:polson:2012}, and
\cite{fruhwirth-schnatter-fruhwirth-2010}. \cite{fruhwirth-schnatter-fruhwirth-2010}
provides a detailed comparison of several methods itself.  For instance, the
authors find that method of \cite{holmes:held:2006} did not beat their discrete
mixture of normals.  We find this as well and hence omit it from the
comparisons below.

For each data set, we run 10 MCMC simulations with 12,000 samples each,
discarding the first 2,000 as burn-in, thereby leaving 10 batches of 10,000
samples.  The effective sample size for each regression coefficient is
calculated using the \texttt{coda} \citep{coda-2006} package and averaged across
the 10 batches.  The component-wise minimum, median, and maximum of the
(average) effective sample sizes are reported to summarize the results.  A
similar calculation is performed to calculate minimum, median, and maximum
effective sampling rates (ESR).  The effective sampling rate is the ratio of the
effective sample size to the time taken to produce the sample.  Thus, the
effective sampling rates are normalized by the time taken to produce the 10,000
samples, disregarding the time taken for initialization, preprocessing, and
burn-in.  When discussing the various methods the primary metric we refer to is
the median effective sampling rate, following the example of
\cite{fruhwirth-schnatter-fruhwirth-2010}.

All of these experiments are carried out using R 2.15.1 on an Ubuntu machine
with 8GB or RAM and an Intel Core i5 quad core processor.  The number of cores
is a potentially important factor as some libraries, including those that perform the matrix operations in R, may take advantage of multiple cores.  The C code that we have written does not use parallelism.

In the sections that follow, each table reports the following metrics:
\begin{compactitem}
\item the execution time of each method in seconds;
\item the acceptance rate (relevant for the Metropolis samplers);
\item the minimum, median, and maximum effective sample sizes (ESS) across all
  fixed or random effects; and
\item the minimum, median, and maximum effective sampling rates (ESR) across all
  fixed or random effects, defined as the effective sample size per second of
  runtime.
\end{compactitem}

\section{Benchmarks: binary logistic regression}

\subsection{Data Sets}
\label{sec:blogit-datasets}
\begin{compactdesc}
\item[Nodal:] part of the \texttt{boot} R package \citep{boot-2012}. The
  response indicates if cancer has spread from the prostate to surrounding lymph
  nodes.  There are 53 observations and 5 binary predictors.

\item[Pima Indian:] There are 768 observations and 8 continuous predictors.  It is noted on the UCI
website\footnote{\url{http://archive.ics.uci.edu/ml/datasets/Pima+Indians+Diabetes}} that there are many predictor values coded as 0, though the physical measurement should be non-zero.  We have removed all of those entries to
generate a data set with 392 observations.  The marginal mean incidence of diabetes is roughly 0.33 before and after removing these data points.

\item[Heart:] The response represents either an absence or presence of heart disease.\footnote{\url{http://archive.ics.uci.edu/ml/datasets/Statlog+(Heart)}}  There are 270 observations and 13 attributes, of which 6 are categorical or binary and
1 is ordinal.  The ordinal covariate has been stratified by dummy variables.

\item[Australian Credit:] The response represents either accepting or rejecting
  a credit card
  application.\footnote{\url{http://archive.ics.uci.edu/ml/datasets/Statlog+(Australian+Credit+Approval)}.}
  The meaning of each predictor was removed to protect the propriety of the
  original data.  There are 690 observations and 14 attributes, of which 8 are
  categorical or binary.  There were 37 observations with missing attribute
  values.  These missing values were replaced by the mode of the attribute in
  the case of categorical data and the mean of the attribute for continuous
  data.  This dataset is linearly separable and results in some divergent
  regression coefficients, which are kept in check by the prior.

\item[German Credit 1 and 2:] The response represents either a good or bad credit risk.\footnote{\url{http://archive.ics.uci.edu/ml/datasets/Statlog+(German+Credit+Data)}}  There are 1000
observations and 20 attributes, including both continuous and categorical data.
We benchmark two scenarios.  In the first, the ordinal covariates have been
given integer values and have not been stratified by dummy variables, yielding a
total of 24 numeric predictors.  In the second, the ordinal data has been
stratified by dummy variables, yielding a total of 48 predictors.

\item[Synthetic 1:] Simulated data with 150 outcomes and 10 predictors.  The design points were chosen to be orthogonal.  The data are included as a supplemental file.

\item[Synthetic 2:] Simulated data with 500 outcomes and 20 predictors.  The design points were simulated from a Gaussian factor model, to yield pronounced patterns of collinearity.  The data are included as a supplemental file.

\end{compactdesc}

\subsection{Methods}

All of these routines are implemented in R, though some of them make calls to C.
In particular, the independence Metropolis samplers do not make use of any
non-standard calls to C, though their implementations have very little R
overhead in terms of function calls.  The \Polya-Gamma method calls a C routine
to sample the \Polya-Gamma random variates, but otherwise only uses R.

As a check upon our independence Metropolis sampler we include the
independence Metropolis sampler of \cite{rossi:allenby:mcculloch:2005}, which may be found in
the \texttt{bayesm} package \citep{bayesm-2012}.  Their sampler uses a
$t_6$ proposal, while ours uses a normal proposal.  The suite of routines in the
\texttt{binomlogit} package \citep{binomlogit-2012} implement the techniques
discussed in \cite{fussl-etal-2011-slides}.  One routine provided by the
\texttt{binomlogit} package coincides with the technique described in
\cite{fruhwirth-schnatter-fruhwirth-2010} for the case of binary logistic
regression.  A separate routine implements the latter and uses a single call to
C.  Gramacy and Polson's R package, \texttt{reglogit}, also calls external C
code \citep{reglogit-2012}.

For every data set the regression coefficient was
given a diffuse $N(0, 0.01 I)$ prior, except when using Gramacy and Polson's
method, in which case it was given a $\exp(\sum_{i} |\beta_i / 100|)$ prior per
the specifications of the \texttt{reglogit} package.  The following is a short
description of each method along with its abbreviated name.

\begin{compactdesc}

\item PG: The \Polya-Gamma method described previously.

\item FS: \cite{fruhwirth-schnatter-fruhwirth-2010} follow \cite{holmes:held:2006}
and use the representation
\begin{equation}
\label{eqn:hh-rep}
y_i = \1 \{ z_i > 0 \} \;, \quad z_i = x_i \beta + \ep_i \;, \quad \ep_i \sim \text{Lo} \, ,
\end{equation}
where $\text{Lo}$ is the standard logistic distribution \citep[c.f.][for the probit case]{albert:chib:1993}.   They approximate $p(\ep_i)$ using a discrete mixture of normals.

\item IndMH: Independence Metropolis with a normal proposal using the
posterior mode and the Hessian at the mode for the mean and precision matrix.

\item RAM: after Rossi, Allenby, and McCulloch.  An independence Metropolis
  with a $t_6$ proposal from the R package \texttt{bayesm} \citep{bayesm-2012}.
  Calculate the posterior mode and the Hessian at the mode to pick the mean and
  scale matrix of the proposal.

\item OD: The method of \cite{obrien-dunson-2004}.  Strictly speaking, this is not
logistic regression; it is binary regression using a Student-$t$ cumulative
distribution function as the inverse link function.

\item dRUMAuxMix: Work by \cite{fussl-etal-2011-slides} that extends the
  technique of \cite{fruhwirth-schnatter-fruhwirth-2010}.  A convenient
  representation is found that relies on a discrete mixture of normals
  approximation for posterior inference that works for binomial logistic
  regression.  From the R package \texttt{binomlogit} \citep{binomlogit-2012}.

\item dRUMIndMH: Similar to dRUMAuxMix, but instead of using a discrete mixture of
normals, use a single normal to approximate the error term and correct using
Metropolis-Hastings.  From the R package \texttt{binomlogit}.

\item IndivdRUMIndMH: This is the same as dRUMIndMH, but specific to binary
logistic regression.  From the R package \texttt{binomlogit}.

\item dRUMHAM: Identical to dRUMAuxMix, but now use a discrete mixture of normals
approximation in which the number of components to mix over is determined by
$y_i / n_i$.  From the R package \texttt{binomlogit}.

\item GP: after \cite{gramacy:polson:2012}.  Another data augmentation scheme
  with only a single layer of latents.  This routine uses a double exponential
  prior, which is hard-coded in the R package \texttt{reglogit}
  \citep{reglogit-2012}.  We set the scale of this prior to agree with the scale
  of the normal prior we used in all other cases above.

\end{compactdesc}

\subsection{Results}

The results are shown in Tables \ref{tab:blogit-nodal} through \ref{tab:blogit-synth2}.  As mentioned previously, these are averaged over 10 runs.

\begin{table}
\tablesize
\centering
\caption{\label{tab:blogit-nodal} Nodal data: $N= 53$, $P=6$}
  \vspace{\baselineskip}
\begin{tabular}{l r r r r r r r r } 
\toprule
          Method  &     time &    ARate &  ESS.min &  ESS.med &  ESS.max &  ESR.min &  ESR.med &  ESR.max \\  
          \midrule
              PG  &     2.98 &     1.00 &  3221.12 &  4859.89 &  5571.76 &  1081.55 &  1631.96 &  1871.00 \\ 
           IndMH  &     1.76 &     0.66 &  1070.23 &  1401.89 &  1799.02 &   610.19 &   794.93 &  1024.56 \\ 
             RAM  &     1.29 &     0.64 &  3127.79 &  3609.31 &  3993.75 &  2422.49 &  2794.69 &  3090.05 \\ 
              OD  &     3.95 &     1.00 &   975.36 &  1644.66 &  1868.93 &   246.58 &   415.80 &   472.48 \\ 
              FS  &     3.49 &     1.00 &   979.56 &  1575.06 &  1902.24 &   280.38 &   450.67 &   544.38 \\ 
      dRUMAuxMix  &     2.69 &     1.00 &  1015.18 &  1613.45 &  1912.78 &   376.98 &   598.94 &   710.30 \\ 
       dRUMIndMH  &     1.41 &     0.62 &   693.34 &  1058.95 &  1330.14 &   492.45 &   751.28 &   943.66 \\ 
  IndivdRUMIndMH  &     1.30 &     0.61 &   671.76 &  1148.61 &  1339.58 &   518.79 &   886.78 &  1034.49 \\ 
         dRUMHAM  &     3.06 &     1.00 &   968.41 &  1563.88 &  1903.00 &   316.82 &   511.63 &   622.75 \\ 
              GP  &    17.86 &     1.00 &  2821.49 &  4419.37 &  5395.29 &   157.93 &   247.38 &   302.00 \\
              \bottomrule
 \end{tabular}
 \end{table}

 \begin{table}
 \tablesize
 \centering
 \label{tab:blogit-diabetes}
  \caption{Diabetes data, N=270, P=19}
  \vspace{\baselineskip}
\begin{tabular}{l r r r r r r r r } 
\toprule
          Method  &     time &    ARate &  ESS.min &  ESS.med &  ESS.max &  ESR.min &  ESR.med &  ESR.max \\ 
          \midrule
              PG  &     5.65 &     1.00 &  3255.25 &  5444.79 &  6437.16 &   576.14 &   963.65 &  1139.24 \\ 
           IndMH  &     2.21 &     0.81 &  3890.09 &  5245.16 &  5672.83 &  1759.54 &  2371.27 &  2562.59 \\ 
             RAM  &     1.93 &     0.68 &  4751.95 &  4881.63 &  5072.02 &  2456.33 &  2523.85 &  2621.98 \\ 
              OD  &     6.63 &     1.00 &  1188.00 &  2070.56 &  2541.70 &   179.27 &   312.39 &   383.49 \\ 
              FS  &     6.61 &     1.00 &  1087.40 &  1969.22 &  2428.81 &   164.39 &   297.72 &   367.18 \\ 
      dRUMAuxMix  &     6.05 &     1.00 &  1158.42 &  1998.06 &  2445.66 &   191.52 &   330.39 &   404.34 \\ 
       dRUMIndMH  &     3.82 &     0.49 &   647.20 &  1138.03 &  1338.73 &   169.41 &   297.98 &   350.43 \\ 
  IndivdRUMIndMH  &     2.91 &     0.48 &   614.57 &  1111.60 &  1281.51 &   211.33 &   382.23 &   440.63 \\ 
         dRUMHAM  &     6.98 &     1.00 &  1101.71 &  1953.60 &  2366.54 &   157.89 &   280.01 &   339.18 \\ 
              GP  &    88.11 &     1.00 &  2926.17 &  5075.60 &  5847.59 &    33.21 &    57.61 &    66.37 \\
              \bottomrule
 \end{tabular}
  \end{table}

 \begin{table}
 \tablesize
 \centering
 \label{tab:blogit-heart}
   \caption{Heart data: $N = 270$, $P=19$}
  \vspace{\baselineskip}
\begin{tabular}{l r r r r r r r r } 
\toprule
          Method  &     time &    ARate &  ESS.min &  ESS.med &  ESS.max &  ESR.min &  ESR.med &  ESR.max \\  
          \midrule
              PG  &     5.56 &     1.00 &  2097.03 &  3526.82 &  4852.37 &   377.08 &   633.92 &   872.30 \\ 
           IndMH  &     2.24 &     0.39 &   589.64 &   744.86 &   920.85 &   263.63 &   333.19 &   413.03 \\ 
             RAM  &     1.98 &     0.30 &   862.60 &  1076.04 &  1275.22 &   436.51 &   543.95 &   645.13 \\ 
              OD  &     6.68 &     1.00 &   620.90 &  1094.27 &  1596.40 &    93.03 &   163.91 &   239.12 \\ 
              FS  &     6.50 &     1.00 &   558.95 &  1112.53 &  1573.88 &    85.92 &   171.04 &   241.96 \\ 
      dRUMAuxMix  &     5.97 &     1.00 &   604.60 &  1118.89 &  1523.84 &   101.33 &   187.49 &   255.38 \\ 
       dRUMIndMH  &     3.51 &     0.34 &   256.85 &   445.87 &   653.13 &    73.24 &   127.28 &   186.38 \\ 
  IndivdRUMIndMH  &     2.88 &     0.35 &   290.41 &   467.93 &   607.80 &   100.70 &   162.25 &   210.79 \\ 
         dRUMHAM  &     7.06 &     1.00 &   592.63 &  1133.59 &  1518.72 &    83.99 &   160.72 &   215.25 \\ 
              GP  &    65.53 &     1.00 &  1398.43 &  2807.09 &  4287.55 &    21.34 &    42.84 &    65.43 \\
              \bottomrule
 \end{tabular}
\end{table}

\begin{table}
\tablesize
\centering
 \label{tab:blogit-aus}
    \caption{Australian Credit: $N = 690$, $P=35$}
  \vspace{\baselineskip}
\begin{tabular}{l r r r r r r r r } 
\toprule
          Method  &     time &    ARate &  ESS.min &  ESS.med &  ESS.max &  ESR.min &  ESR.med &  ESR.max \\  
          \midrule
              PG  &    12.78 &     1.00 &   409.98 &  3841.02 &  5235.53 &    32.07 &   300.44 &   409.48 \\ 
           IndMH  &     3.42 &     0.22 &   211.48 &   414.87 &   480.02 &    61.89 &   121.53 &   140.59 \\ 
             RAM  &     3.92 &     0.00 &     8.27 &    10.08 &    26.95 &     2.11 &     2.57 &     6.87 \\ 
              OD  &    14.59 &     1.00 &    28.59 &   988.30 &  1784.77 &     1.96 &    67.73 &   122.33 \\ 
              FS  &    15.05 &     1.00 &    36.22 &  1043.69 &  1768.47 &     2.41 &    69.37 &   117.53 \\ 
      dRUMAuxMix  &    14.92 &     1.00 &    29.34 &   991.32 &  1764.40 &     1.97 &    66.44 &   118.27 \\ 
       dRUMIndMH  &     8.93 &     0.19 &    13.03 &   222.92 &   435.42 &     1.46 &    24.97 &    48.76 \\ 
  IndivdRUMIndMH  &     7.38 &     0.19 &    13.61 &   220.02 &   448.76 &     1.85 &    29.83 &    60.84 \\ 
         dRUMHAM  &    18.64 &     1.00 &    28.75 &  1040.74 &  1817.85 &     1.54 &    55.84 &    97.53 \\ 
              GP  &   162.73 &     1.00 &    95.81 &  2632.74 &  4757.04 &     0.59 &    16.18 &    29.23 \\
              \bottomrule
 \end{tabular}
 \end{table}

 \begin{table}
 \tablesize
 \centering
 \label{tab:blogit-ger.num}
     \caption{German Credit 1: $N = 1000$, $P=25$}
  \vspace{\baselineskip}
\begin{tabular}{l r r r r r r r r } 
\toprule
          Method  &     time &    ARate &  ESS.min &  ESS.med &  ESS.max &  ESR.min &  ESR.med &  ESR.max \\ 
          \midrule
              PG  &    15.37 &     1.00 &  3111.71 &  5893.15 &  6462.36 &   202.45 &   383.40 &   420.44 \\ 
           IndMH  &     3.58 &     0.68 &  2332.25 &  3340.54 &  3850.71 &   651.41 &   932.96 &  1075.47 \\ 
             RAM  &     4.17 &     0.43 &  1906.23 &  2348.20 &  2478.68 &   457.11 &   563.07 &   594.30 \\ 
              OD  &    17.32 &     1.00 &  1030.53 &  2226.92 &  2637.98 &    59.51 &   128.59 &   152.33 \\ 
              FS  &    18.21 &     1.00 &   957.05 &  2154.06 &  2503.09 &    52.55 &   118.27 &   137.43 \\ 
      dRUMAuxMix  &    18.13 &     1.00 &   955.41 &  2150.59 &  2533.40 &    52.68 &   118.60 &   139.70 \\ 
       dRUMIndMH  &    10.60 &     0.29 &   360.72 &   702.89 &   809.20 &    34.03 &    66.30 &    76.33 \\ 
  IndivdRUMIndMH  &     8.35 &     0.29 &   334.83 &   693.41 &   802.33 &    40.09 &    83.04 &    96.08 \\ 
         dRUMHAM  &    22.15 &     1.00 &   958.02 &  2137.13 &  2477.10 &    43.25 &    96.48 &   111.84 \\ 
              GP  &   223.80 &     1.00 &  2588.07 &  5317.57 &  6059.81 &    11.56 &    23.76 &    27.08 \\
              \bottomrule
 \end{tabular}
 \end{table}

 \begin{table}
 \tablesize
 \centering
 \label{tab:blogit-ger}
      \caption{German Credit 2: $N = 1000$, $P=49$}
  \vspace{\baselineskip}
\begin{tabular}{l r r r r r r r r } 
\toprule
          Method  &     time &    ARate &  ESS.min &  ESS.med &  ESS.max &  ESR.min &  ESR.med &  ESR.max \\ 
          \midrule
              PG  &    22.30 &     1.00 &  2803.23 &  5748.30 &  6774.82 &   125.69 &   257.75 &   303.76 \\ 
           IndMH  &     4.72 &     0.41 &   730.34 &  1050.29 &  1236.55 &   154.73 &   222.70 &   262.05 \\ 
             RAM  &     6.02 &     0.00 &     5.49 &    14.40 &   235.50 &     0.91 &     2.39 &    39.13 \\ 
              OD  &    25.34 &     1.00 &   717.94 &  2153.05 &  2655.86 &    28.33 &    84.96 &   104.80 \\ 
              FS  &    26.44 &     1.00 &   727.17 &  2083.48 &  2554.62 &    27.50 &    78.80 &    96.62 \\ 
      dRUMAuxMix  &    26.91 &     1.00 &   755.31 &  2093.68 &  2562.11 &    28.06 &    77.80 &    95.21 \\ 
       dRUMIndMH  &    14.66 &     0.13 &   132.74 &   291.11 &   345.12 &     9.05 &    19.86 &    23.54 \\ 
  IndivdRUMIndMH  &    12.45 &     0.13 &   136.57 &   290.13 &   345.22 &    10.97 &    23.31 &    27.73 \\ 
         dRUMHAM  &    35.99 &     1.00 &   742.04 &  2075.41 &  2579.42 &    20.62 &    57.67 &    71.67 \\ 
              GP  &   243.41 &     1.00 &  2181.84 &  5353.41 &  6315.71 &     8.96 &    21.99 &    25.95 \\
              \bottomrule
 \end{tabular}
\end{table}


\begin{table}
\tablesize
\centering
      \caption{Synthetic 1, orthogonal predictors: $N=150$, $P=10$}
  \vspace{\baselineskip}
\begin{tabular}{l r r r r r r r r } 
\toprule
          Method  &   time &  ARate & ESS.min & ESS.med & ESS.max & ESR.min & ESR.med & ESR.max \\ 
          \midrule
            PG  &     3.83 &     1.00 &   6140.81 &   7692.04 &   8425.59 &   1604.93 &   2010.44 &   2201.04 \\ 
            FS  &     4.46 &     1.00 &   2162.42 &   2891.85 &   3359.98 &    484.91 &    648.41 &    753.38 \\ 
         IndMH  &     1.87 &     0.78 &   3009.10 &   4114.86 &   4489.16 &   1609.67 &   2200.72 &   2397.94 \\ 
           RAM  &     1.54 &     0.64 &   3969.87 &   4403.51 &   4554.04 &   2579.84 &   2862.12 &   2960.05 \\ 
            OD  &     4.88 &     1.00 &   2325.65 &   3030.71 &   3590.09 &    476.36 &    620.74 &    735.29 \\ 
     dRUMIndMH  &     2.10 &     0.53 &   1418.07 &   1791.71 &   2030.70 &    676.70 &    854.94 &    968.96 \\ 
       dRUMHAM  &     4.34 &     1.00 &   2170.71 &   2887.57 &   3364.68 &    500.67 &    666.18 &    776.37 \\ 
    dRUMAuxMix  &     3.79 &     1.00 &   2207.30 &   2932.21 &   3318.37 &    583.11 &    774.58 &    876.59 \\ 
IndivdRUMIndMH  &     1.72 &     0.53 &   1386.35 &   1793.50 &   2022.31 &    805.40 &   1042.20 &   1174.97 \\ 
            GP  &    38.53 &     1.00 &   5581.31 &   7284.98 &   8257.91 &    144.85 &    189.07 &    214.32 \\
            \bottomrule
 \end{tabular}
\end{table}


\begin{table}
\tablesize
\centering
      \caption{ \label{tab:blogit-synth2} Synthetic 2, correlated predictors: $N=500$, $P=20$}
  \vspace{\baselineskip}
\begin{tabular}{l r r r r r r r r } 
\toprule
          Method  &   time &  ARate & ESS.min & ESS.med & ESS.max & ESR.min & ESR.med & ESR.max \\ 
          \midrule
            PG  &     8.70 &     1.00 &   1971.61 &   2612.10 &   2837.41 &    226.46 &    300.10 &    325.95 \\ 
            FS  &     9.85 &     1.00 &    459.59 &    585.91 &    651.05 &     46.65 &     59.48 &     66.09 \\ 
         IndMH  &     2.52 &     0.42 &    826.94 &    966.95 &   1119.81 &    327.98 &    382.96 &    443.65 \\ 
           RAM  &     2.59 &     0.34 &   1312.67 &   1387.94 &   1520.29 &    507.54 &    536.84 &    588.10 \\ 
            OD  &     9.67 &     1.00 &    428.12 &    573.75 &    652.30 &     44.28 &     59.36 &     67.48 \\ 
     dRUMIndMH  &     5.35 &     0.33 &    211.14 &    249.33 &    281.50 &     39.46 &     46.58 &     52.59 \\ 
       dRUMHAM  &    11.18 &     1.00 &    452.50 &    563.30 &    644.73 &     40.46 &     50.37 &     57.65 \\ 
    dRUMAuxMix  &     9.51 &     1.00 &    422.00 &    564.95 &    639.89 &     44.39 &     59.43 &     67.31 \\ 
IndivdRUMIndMH  &     4.17 &     0.32 &    201.50 &    239.50 &    280.35 &     48.37 &     57.51 &     67.30 \\ 
            GP  &   114.98 &     1.00 &    748.71 &   1102.59 &   1386.08 &      6.51 &      9.59 &     12.06 \\
            \bottomrule
 \end{tabular}

\end{table}


\section{Benchmarks: logit mixed models}

A major advantage of data augmentation, and hence the \Polya-Gamma
technique, is that it is easily adapted to more complicated models.  We consider three examples of logistic mixed model whose intercepts are random effects, in which case the log odds for observation $j$ from group $i$, $\psi_{ij}$, is
modeled by
\begin{eqnarray}
\psi_{ij} &=& \alpha_i + x_{ij} \beta \nonumber \\
\alpha_i &\sim& N(m, 1/\phi) \nonumber \\
m &\sim& N(0, \kappa^2 / \phi) \nonumber \\
\phi &\sim& Ga(1,1) \nonumber \\
\beta &\sim& N(0, 100 I) \, .\label{eqn:mixed-model}
\end{eqnarray}

An extra step is easily added to the \Polya-Gamma Gibbs sampler to estimate
$(\alpha, \beta, m)$ and $\phi$.  We use the following three data sets to benchmark the \Polya-Gamma method.

\begin{description}

\item[Synthetic:] A synthetically generated dataset with 5 groups, 100 observations within each
group, and a single fixed effect.

\item[Polls:] Voting data from a Presidential campaign \citep{gelman-hill-2006}.
  The response indicates a vote for or against former President George W. Bush.
  There are 49 groups corresponding to states.  Some states have very few
  observations, requiring a model that shrinks coefficients towards a global
  mean to get reasonable estimates.  A single fixed effect for the race of the
  respondent is included, although it would be trivial to include other
  covariates.  Entries with missing data were deleted to yield a total of 2015
  observations.

\item[Xerop:]

The Xerop data set from the \texttt{epicalc} R package \citep{epicalc-2012}.
Indonesian children were observed to examine the causes of respiratory
infections; of specific interest is whether vitamin A deficiencies cause such
illness.  Multiple observations of each individual were made.  The data is
grouped by individual id yielding a total of 275 random intercepts.  A total of
5 fixed effects are included in the model---age, sex, height, stunted growth,
and season---corresponding to an 8 dimensional regression coefficient after
expanding the season covariate using dummy variables.

\end{description}

Table \ref{tab:mm-examples} summarizes the results, which suggest that the \Polya-Gamma method is a sensible default choice for fitting nonlinear mixed-effect models.

While an independence Metropolis sampler usually works well for binary logistic
regression, it does not work well for the mixed models we consider.  For instance, in the polls data set, at least two heuristics that suggest the Laplace approximation will be a poor proposal.
First, the posterior mode does not coincide with the posterior mean.  Second,
the Hessian at the mode is nearly singular.  Its smallest eigenvalue, in
absolute terms, corresponds to an eigenvector that points predominantly in the
direction of $\phi$.  Thus, there is a great deal of uncertainty in the
posterior mode of $\phi$.  If we iteratively solve for the MLE by starting at
the posterior mean, or if we start at the posterior mode for all the coordinates
except $\phi$, which we initialize at the posterior mean of $\phi$, then we
arrive at the same end point.  This suggests that the behavior we observe is not due to a poor
choice of initial value or a poor stopping rule.

The first image in Figure \ref{fig:phi-mm} shows that the difference between the
posterior mode and posterior mean is, by far, greatest in the $\phi$ coordinate.
The second image in Figure \ref{fig:phi-mm} provides one example of the lack of
curvature in $\phi$ at the mode.  If one plots $\phi$ against the other
coordinates, then one sees a similar, though often less extreme, picture.  In
general, large values of $\phi$ are found at the tip of an isosceles triangular
whose base runs parallel to the coordinate that is not $\phi$.  While the upper
tip of the triangle may posses the most likely posterior values, the rest of the
posterior does not fall away quick enough to make that a likely posterior random
variate.

\begin{table}
\tablesize
\centering
\begin{tabular}{l r r r r r r r r } 
\multicolumn{9}{c}{Synthetic: $N=500$, $P_a=5$, $P_b=1$, samp=10,000, burn=2,000, thin=1} \\
\midrule
Method   &     time &    ARate &  ESS.min &  ESS.med &  ESS.max &  ESR.min &  ESR.med &  ESR.max \\ 
PG       &     7.29 &     1.00 &  4289.29 &  6975.73 &  9651.69 &   588.55 &   957.18 &  1324.31 \\ 
Ind-Met. &     3.96 &     0.70 &  1904.71 &  3675.02 &  4043.42 &   482.54 &   928.65 &  1022.38 \\
\\
\multicolumn{9}{c}{Polls: $N=2015$, $P_a=49$, $P_b=1$, samp=100,000, burn=20,000, thin=10} \\
\midrule
Method   &     time &    ARate &  ESS.min &  ESS.med &  ESS.max &  ESR.min &  ESR.med &  ESR.max \\ 
PG       &    31.94 &     1.00 &  5948.62 &  9194.42 &  9925.73 &   186.25 &   287.86 &   310.75 \\ 
Ind-Met. &   146.76 &  0.00674 &    31.36 &    52.81 &    86.54 &     0.21 &     0.36 &     0.59 \\


\\
\multicolumn{9}{c}{Xerop: $N=1200$, $P_a=275$, $P_b=8$, samp=100,000, burn=20,000, thin=10} \\
\midrule
Method   &     time &     ARate &  ESS.min &  ESS.med &  ESS.max &  ESR.min &  ESR.med &  ESR.max \\ 
PG       &   174.38 &      1.00 &   850.34 &  3038.76 &  4438.99 &     4.88 &    17.43 &    25.46 \\ 
Ind-Met. &   457.86 & 0.00002.5 &     1.85 &     3.21 &    12.32 &     0.00 &     0.01 &     0.03
\end{tabular}

\caption{\label{tab:mm-examples} A set of three benchmarks for binary logistic mixed models.  $N$
  denotes the number of samples, $P_a$ denotes the number of groups, and $P_b$
  denotes the dimension of the fixed effects coefficient.  The random effects
  are limited to group dependent intercepts.  Notice that the second and third
  benchmarks are thinned every 10 samples to produce a total of 10,000 posterior
  draws.  Even after thinning, the effective sample size for each is low compared
  to the PG method.  The effective samples sizes are taken for the collection
  $(\alpha, \beta, m)$ and do not include $\phi$.}

\end{table}

\begin{figure}
\label{fig:phi-mm}
\centering
\includegraphics[scale=0.4]{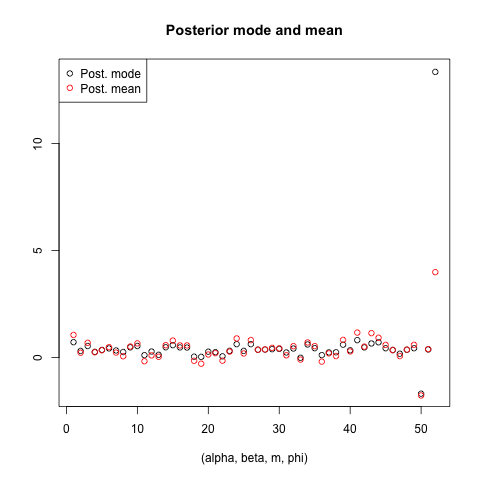}
\includegraphics[scale=0.4]{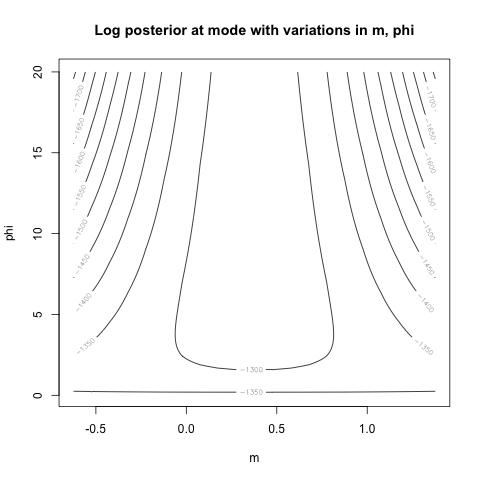}

\includegraphics[scale=0.4]{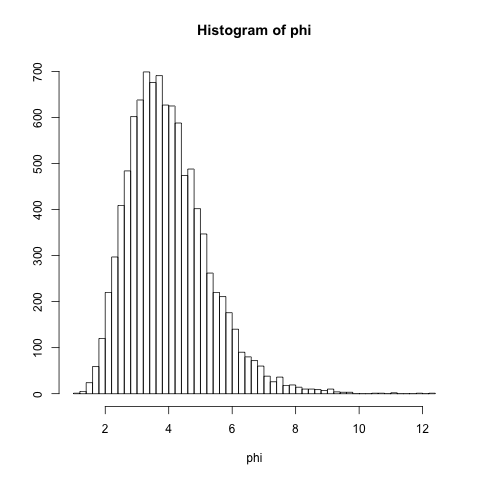}
\includegraphics[scale=0.4]{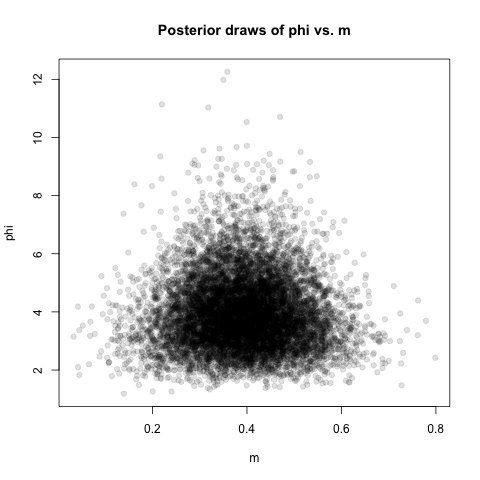}

\caption{Proceeding from left to right and top to bottom.  Upper left: the
  posterior mode and the posterior mean of $(\alpha, \beta, m, \phi)$.  The mode
  and mean are most different in $\phi$.  Upper right: the level sets of $(\phi,
  m)$ of the log posterior when the other coordinates are evaluated at the
  posterior mode.  The log posterior is very flat when moving along $\phi$.
  Bottom left: the marginal posterior distribution of $\phi$.  When
  marginalizing, one finds that few large values of $\phi$ are likely.  Bottom
  right: a scatter plot of posterior samples for $(\phi, m)$. Again, one sees
  that upon marginalizing out the other coordinates the posterior mass is
  concentrated at relatively small values of $\phi$ compared to its value at the
  posterior mode.}
\end{figure}

\section{Benchmarks: negative-binomial models}

\begin{table}
\tablesize
\centering
\begin{tabular}{l r r r r r r r r } 
\multicolumn{9}{c}{Fewer counts: $\alpha = 2$, $\bar y = 8.11$, $\sum y_i = 3244$, $N=400$} \\
\midrule
Method  &     time &    ARate &  ESS.min &  ESS.med &  ESS.max &  ESR.min &  ESR.med &  ESR.max \\ 
    PG  &    26.84 &     1.00 &  7269.13 &  7646.16 &  8533.51 &   270.81 &   284.85 &   317.91 \\ 
    FS  &     8.10 &     1.00 &   697.38 &   719.36 &   759.13 &    86.10 &    88.80 &    93.70 \\ 
   RAM  &    10.17 &    30.08 &   737.95 &   748.51 &   758.57 &    72.59 &    73.62 &    74.61 \\
\\
\multicolumn{9}{c}{More counts: $\alpha = 3$, $\bar y = 23.98$, $\sum y_i = 9593$, $N=400$} \\
\midrule
Method  &     time &    ARate &  ESS.min &  ESS.med &  ESS.max &  ESR.min &  ESR.med &  ESR.max \\ 
    PG  &    58.99 &     1.00 &  3088.04 &  3589.67 &  4377.21 &    52.35 &    60.85 &    74.20 \\ 
    FS  &     8.21 &     1.00 &   901.50 &   915.39 &   935.06 &   109.73 &   111.45 &   113.84 \\ 
   RAM  &     8.69 &    30.33 &   757.91 &   763.81 &   771.73 &    87.25 &    87.93 &    88.84
 \end{tabular}
 \caption{\label{tab:nb-synth}
   Negative binomial regression.  PG is the \Polya-Gamma Gibbs sampler.  FS follows
   \cite{fruhwirth-schnatter-etal-2009}.  RAM is the random walk
   Metropolis-Hastings sampler from the \texttt{bayesm} package \citep{bayesm-2012}. 
   $\alpha$ is the true intercept and $y_i$ is the $i$th response.  Each model
   has three continuous predictors.
 }
\end{table}

\begin{table}
\tablesize
\centering
\begin{tabular}{l r r r r r r r r } 
\multicolumn{9}{c}{Gaussian process 1: $\bar y = 35.7$, $\sum y_i = 9137$, $N=256$, $\ell=0.1$, nugget=$0.0$} \\
\midrule
Method  &     time &    ARate &  ESS.min &  ESS.med &  ESS.max &  ESR.min &  ESR.med &  ESR.max \\ 
    PG  &   101.89 &     1.00 &   790.55 &  6308.65 &  9798.04 &     7.76 &    61.92 &    96.19 \\ 
    FS  &    53.17 &     1.00 &   481.36 &  1296.27 &  2257.27 &     9.05 &    24.38 &    42.45 \\
\\
\multicolumn{9}{c}{Gaussian process 2: $\bar y = 22.7$, $\sum y_i = 22732$, $N=1000$, $\ell=0.1$, nugget=$0.0001$} \\
\midrule
Method  &     time &    ARate &  ESS.min &  ESS.med &  ESS.max &  ESR.min &  ESR.med &  ESR.max \\ 
    PG  &  2021.78 &     1.00 &  1966.77 &  6386.43 &  9862.54 &     0.97 &     3.16 &     4.88 \\ 
    FS  &  1867.05 &     1.00 &   270.13 &  1156.52 &  1761.70 &     0.14 &     0.62 &     0.94

 \end{tabular}
 \caption{\label{tab:nb-gp-ex01}
 Binomial spatial models.  PG is the \Polya-Gamma Gibbs sampler.  FS follows
   \cite{fruhwirth-schnatter-etal-2009}. $N$ is the
   total number of observations and $y_i$ denotes the $i$th observation.}
\end{table}

We simulated two synthetic data sets with $N=400$ data points using the model
$$
y_i \sim NB(\textmd{mean}=\mu_i, d) \; , \quad \log \mu_i = \alpha + x_i \beta
$$
where $\beta \in \mathbb{R}^3$.  Both data sets are included as supplements.  The parameter $d$ is estimated using a
random-walk Metropolis-Hastings step over the integers.  (Neither the \Polya-Gamma method nor the R package by \cite{binomlogit-2012} are set up to work efficiently with non-integer values of this parameter.)  The model with fewer counts corresponds
to $\alpha = 2$, while the model with more counts corresponds to $\alpha = 3$.  This produced a sample mean of roughly 8 in the former case and 24 in the latter.

Table \ref{tab:nb-synth} shows the results for both simulated data sets.  Notice that the \Polya-Gamma method has superior effective sample size in both cases, but a lower effective sampling rate in the second case.  This is caused by the bottleneck of summing $n$ copies of a $\PG(1,z)$ variable to draw a $\PG(n,z)$ variable.  As mentioned in the main manuscript, it is an open challenge to create an efficient \Polya-Gamma sampler for
arbitrary $n$, which would make it the best choice in both cases.

One reaches a different conclusion when working with more complicated models that devote proportionally less time to sampling the auxiliary variables.  Specifically, consider the model
$$
y_i \sim NB(\text{mean} = \mu(x_i), d) \; , \quad
\log \mu  \sim GP(0, K) \, ,
$$
where $K$ is the square exponential covariance kernel,
\[
K(x_1,x_2) = \kappa + \exp \Big( \frac{\|x_1 - x_2\|^2}{2 \ell^2} \Big),
\]
with characteristic length scale $\ell$ and nugget $\kappa$.  Using either the
\Polya-Gamma or \cite{fruhwirth-schnatter-etal-2009} data augmentation
techniques, one arrives at a complete conditional for $\upsilon = \log \mu$ that
is equivalent to the posterior $(\upsilon | z)$ derived using pseudo-data
$\{z_i\}$ generated by
\[
z_i = \upsilon(x_i) + \ep_i, \; \ep_i \sim N(0, V_i)
\]
where $V_i$ is a function of the $i$th auxiliary variable.  Since the prior for
$\upsilon$ is a Gaussian process one may use conjugate formulas to sample the
complete conditional of $\upsilon$.  But producing a random variate from this
distribution is expensive as one must calculate the Cholesky decomposition of a
relatively large matrix at each iteration.  Consequently, the relative time
spent sampling the auxiliary variables in each model decreases, making the
\Polya-Gamma method competitive, and sometimes better, than the method of
Fr\"{u}hwirth-Schnatter et al.  We provide two such examples in Table
(\ref{tab:nb-gp-ex01}).  In the first synthetic data set, 256 equally spaced
points were used to generate a draw $\upsilon(x_i)$ and $y_i$ for $i=1, \ldots,
256$ where $\upsilon \sim GP(0, K)$ and $K$ has length scale $\ell = 0.1$ and a
nugget $\kappa=0.0$.  The average count value of the synthetic data set is $\bar
y = 35.7$, yielding 9137 total counts, which is roughly the same amount as in
the larger negative binomial example discussed earlier.  Now, however, because
proportionally more time is spent sampling the main parameter, and because the
\Polya-Gamma method wastes fewer of these expensive draws, it is more efficient.
In the second synthetic data set, 1000 randomly selected points were chosen to
generate a draw from $\upsilon(x_i)$ and $y_i$ with $\upsilon \sim GP(0, K)$
where $K$ has length scale $\ell = 0.1$ and a nugget $\kappa=0.0001$.  The
average count value is $\bar y = 22.72$, yielding 22,720 total counts.  The
larger problem shows an even greater improvement in performance over the method
of Fr\"{u}hwirth-Schnatter et al.

\section{Extensions}

\subsection{2 $\times$ 2 $\times$ $N$ tables}

Consider a simple example of a binary-response clinical trial conducted in each of $N$ different centers.  Let $n_{ij}$ be the number of patients assigned to treatment regime $j$ in center $i$; and let $Y = \{y_{ij} \}$ be the corresponding number of successes for $i=1, \ldots, N$.  Table 1 presents a data set along these lines, from \citet{skene:wakefield:1990}.  These data arise from a multi-center trial comparing the efficacy of two different topical cream preparations, labeled the treatment and the control.  

Let $p_{ij}$ denote the underlying success probability in center $i$ for treatment $j$, and $\psi_{ij}$ the corresponding log-odds.  If $\psi_i = (\psi_{i1}, \psi_{i2})^T$ is assigned a bivariate normal prior $\psi_i \sim \N(\mu, \Sigma)$ then the posterior for $\Psi = \{\psi_{ij}\}$ is
$$
p(\Psi \mid Y) \propto \prod_{i=1}^N \left\{ \frac{  e^{ y_{i1} \psi_{i1} } }{ ( 1 + e^{\psi_{i1}} )^{n_{i1}} }
 \frac{  e^{ y_{i2} \psi_{i2} } }{ ( 1 + e^{\psi_{i2}} )^{n_{i2}} } \ p( \psi_{i1} , \psi_{i2} \mid \mu, \Sigma) \right\} \, .
$$

We apply Theorem 1 from the main paper to each term in the posterior, thereby introducing augmentation variables $\Omega_i = \mbox{diag}(\omega_{i1}, \omega_{i2})$ for each center.  This yields, after some algebra, a simple Gibbs sampler that iterates between two sets of conditional distributions:
\begin{eqnarray}
(\psi_i \mid Y, \Omega_i, \mu, \Sigma) &\sim& \N(m_i, V_{\Omega_i}) \label{eqn:pgmixture3} \\
(\omega_{ij} \mid \psi_{ij}) &\sim& \PG\left( n_{ij}, \psi_{ij} \right) \nonumber \, ,
\end{eqnarray}
where
\begin{eqnarray*}
V_{\Omega_i}^{-1} &=& \Omega_i + \Sigma^{-1} \\
m_i &=& V_{\Omega_i} (\kappa_i + \Sigma^{-1} \mu) \\
\kappa_i &=& (y_{i1} - n_{i1}/2, y_{i2} - n_{i2}/2)^T \, .
\end{eqnarray*}

Figure \ref{fig:skenewakedata} shows the results of applying this Gibbs sampler to the data from \citet{skene:wakefield:1990}.

\begin{table}
\caption{\label{tab:skenewakeexample} Data from a multi-center, binary-response study on topical cream effectiveness \citep{skene:wakefield:1990}.}
\vspace{1pc}
\begin{center}
\begin{footnotesize}
\begin{tabular}{c  rrrr}
 & \multicolumn{2}{c}{Treatment} & \multicolumn{2}{c}{Control} \\
Center & Success & Total & Success & Total \\ \\
1 & 11 & 36 & 10 & 37\\
2 & 16 & 20 & 22 & 32\\
3 & 14 & 19 & 7 & 19\\
4 & 2 & 16 & 1 & 17\\
5 & 6 & 17 & 0 & 12\\
6 & 1 & 11 & 0 & 10\\
7 & 1 & 5 & 1 & 9\\
8 & 4 & 6 & 6 & 7\\
\end{tabular}
\end{footnotesize}
\end{center}
\end{table}

\begin{figure}
\begin{center}
\includegraphics[width=4.0in]{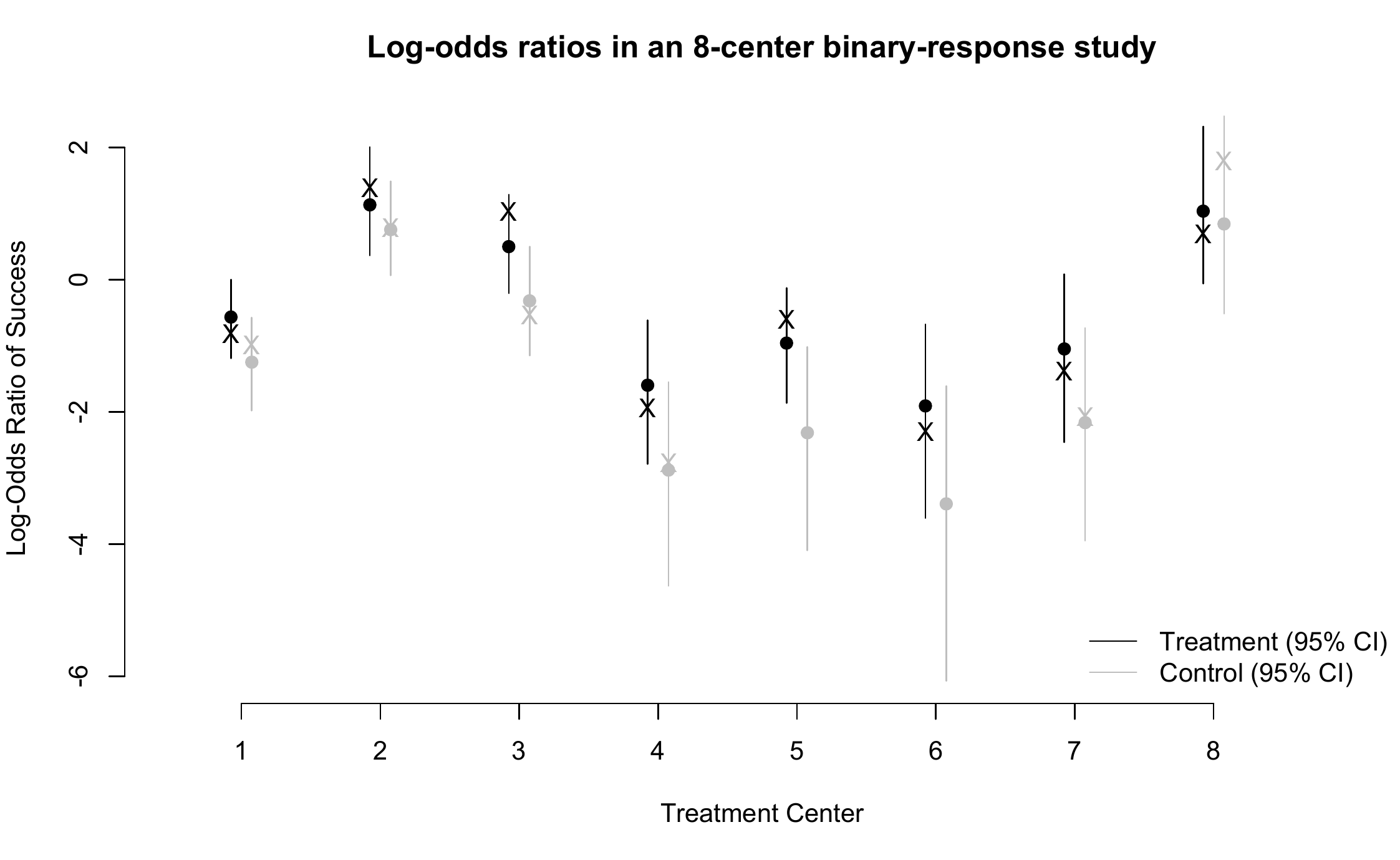}
\caption{\label{fig:skenewakedata} Posterior distributions for the log-odds ratio for each of the 8 centers in the topical-cream study from \citet{skene:wakefield:1990}.  The vertical lines are central $95\%$ posterior credible intervals; the dots are the posterior means; and the X's are the maximum-likelihood estimates of the log-odds ratios, with no shrinkage among the treatment centers.  Note that the maximum-likelihood estimate is $\psi_{i2} = -\infty$ for the control group in centers 5 and 6, as no successes were observed.}
\end{center}
\end{figure}

In this analysis, we used a normal-Wishart prior for $(\mu, \Sigma^{-1})$.   Hyperparameters were chosen to match Table II from \citet{skene:wakefield:1990}, who parameterize the model in terms of the prior expected values for $\rho$, $\sigma^2_{{\psi_1}}$, and $\sigma^2_{{\psi_2}}$, where
$$
\Sigma = \left(
\begin{array}{cc}
\sigma^2_{{\psi_1}} & \rho \\
\rho & \sigma^2_{{\psi_2}}
\end{array}
\right) \, .
$$
To match their choices, we use the following identity that codifies a relationship between the hyperparameters $B$ and $d$, and the prior moments for marginal variances and the correlation coefficient.  If $\Sigma \sim \mathcal{IW}(d, B)$, then
\begin{align*}
B = (d-3) \left[ \begin{array}{cc} \E(\sigma_{\psi_2}^2)+\E(\sigma_{\psi_1}^2)+2\E(\rho) \sqrt{\E(\sigma_{\psi_2}^2)\E(\sigma_{\psi_1}^2)} & \E(\sigma_{\psi_2}^2) + \E(\rho) \sqrt{\E(\sigma_{\psi_2}^2)\E(\sigma_{\psi_1}^2)} \\ \E(\sigma_{\psi_2}^2) + \E(\rho) \sqrt{\E(\sigma_{\psi_2}^2)\E(\sigma_{\psi_1}^2)} & \E(\sigma_{\psi_2}^2) \end{array} \right] \, .
\end{align*}
In this way we are able to map from pre-specified moments to hyperparameters, ending up with $d = 4$ and
$$
B = \left(
\begin{array}{cc}
0.754 & 0.857 \\
0.857 & 1.480
\end{array}
\right) \, .
$$

\subsection{Higher-order tables}

Now consider a multi-center, multinomial response study with more than two treatment arms.  This can be modeled using hierarchy of $N$ different two-way tables, each having the same $J$ treatment regimes and $K$ possible outcomes.  The data D consist of triply indexed outcomes $y_{ijk}$, each indicating the number of observations in center $i$ and treatment $j$ with outcome $k$.  We let $n_{ij} = \sum_k y_{ij}$ indicate the number of subjects assigned to have treatment $j$ at center $i$.

Let $P = \{p_{ijk}\}$ denote the set of probabilities that a subject in center $i$ with treatment $j$ experiences outcome $k$, such that $\sum_k p_{ijk} = 1$ for all $i, j$.  Given these probabilities, the full likelihood is
$$
L(P) = \prod_{i=1}^N \prod_{j=1}^J \prod_{k=1}^K p_{ijk}^{y_{ijk}} \, .
$$

Following \citet{leonard:1975}, we can model these probabilities using a logistic transformation.  Let
$$
p_{ijk} = \frac{\exp(\psi_{ijk})}{\sum_{l=1}^K \exp(\psi_{ijl}) } \, .
$$
Many common prior structures will maintain conditional conjugacy using the Polya-Gamma framework outlined thus far.  For example, we may assume an exchangeable matrix-normal prior at the level of treatment centers:
$$
\psi_i \sim \N(M, \Sigma_R, \Sigma_C) \, ,
$$
where $\psi_{i}$ is the matrix whose $(j,k)$ entry is $\psi_{ijk}$; $M$ is the mean matrix; and $\Sigma_R$ and $\Sigma_C$ are row- and column-specific covariance matrices, respectively. See \citet{dawid81} for further details on matrix-normal theory.  Note that, for identifiability, we set $\psi_{ijK} = 0$, implying that $\Sigma_C$ is of dimension $K-1$.

This leads to a posterior of the form
$$
p(\Psi \mid D) = \prod_{i=1}^N \left[ p(\psi_i) \cdot \prod_{j=1}^J \prod_{k=1}^K  \left( \frac{\exp(\psi_{ijk})}{\sum_{l=1}^K \exp(\psi_{ijl}) } \right)^{y_{ijk}} \right] \, ,
$$
suppressing any dependence on $(M, \Sigma_R, \Sigma_C)$ for notational ease.

To show that this fits within the Polya-Gamma framework, we use a similar approach to \citet{holmes:held:2006}, rewriting each probability as
\begin{eqnarray*}
p_{ijk} &=& \frac{\exp(\psi_{ijk})}{\sum_{l \neq k} \exp(\psi_{ijl})  + \exp(\psi_{ijk})} \\
&=&  \frac{e^{\psi_{ijk}-c_{ijk}}} {1  + e^{\psi_{ijk}-c_{ijk}}} \, ,
\end{eqnarray*}
where $c_{ijk} = \log \{ \sum_{l \neq k} \exp(\psi_{ijl}) \}$ is implicitly a function of the other $\psi_{ijl}$'s for $l \neq k$.

We now fix values of $i$ and $k$ and examine the conditional posterior distribution for $\psi_{i \cdot k} = (\psi_{i1k}, \ldots, \psi_{iJk})'$, given $\psi_{i \cdot l}$ for $l \neq k$:
\begin{eqnarray*}
p( \psi_{i \cdot k} \mid D, \psi_{i \cdot (-k)}) &\propto& p(\psi_{i \cdot k} \mid \psi_{i \cdot (-k)}) \cdot  \prod_{j=1}^J  \left( \frac{e^{\psi_{ijk}-c_{ijk}}} {1  + e^{\psi_{ijk}-c_{ijk}}} \right)^{y_{ijk}} \left( \frac{1}{1  + e^{\psi_{ijk}-c_{ijk}}} \right)^{n_{ij} - y_{ijk}} \\
&=&  p(\psi_{i \cdot k} \mid \psi_{i \cdot (-k)}) \cdot  \prod_{j=1}^J  \frac{e^{y_{ijk}(\psi_{ijk}-c_{ijk})}} {(1  + e^{\psi_{ijk}-c_{ijk}})^{n_{ij}}}
\end{eqnarray*}

This is simply a multivariate version of the same bivariate form in that arises in a $2 \times 2$ table.  Appealing to the theory of Polya-Gamma random variables outlined above, we may express this as:
\begin{eqnarray*}
p( \psi_{i \cdot k} \mid D, \psi_{i \cdot (-k)}) &\propto& p(\psi_{i \cdot k} \mid \psi_{i \cdot (-k)}) \cdot  \prod_{j=1}^J  
\frac{  e^ { \kappa_{ijk} [\psi_{ijk} - c_{ijk}] } } { \cosh^{n_{ij}} ([\psi_{ijk} - c_{ijk}]/2)} \\
&=&  p(\psi_{i \cdot k} \mid \psi_{i \cdot (-k)}) \cdot \prod_{j=1}^J \left[ e^{ \kappa_{ijk} [\psi_{ijk} - c_{ijk}]} \cdot 
\E \left\{ e^{-\omega_{ijk} [\psi_{ijk} - c_{ijk}]^2/2 } \right\} \right] \, ,
\end{eqnarray*}
where $\omega_{ijk} \sim \PG(n_{ij}, 0)$, $j=1,\ldots,J$; and $\kappa_{ijk} = y_{ijk} - n_{ij}/2$.  Given $\{ \omega_{i j k} \}$ for $j=1, \ldots, J$, all of these terms will combine in a single normal kernel, whose mean and covariance structure will depend heavily upon the particular choices of hyperparameters in the matrix-normal prior for $\psi_i$.  Each $\omega_{ijk}$ term can be updated as
$$
(\omega_{ijk} \mid \psi_{ijk}) \sim \PG(n_{ij}, \psi_{ijk} - c_{ijk}) \, ,
$$
leading to a simple MCMC that loops over centers and responses, drawing each vector of parameters $\psi_{i \cdot k}$ (that is, for all treatments at once) conditional on the other $\psi_{i \cdot (-k)}$'s.

\subsection{Multinomial logistic regression}

One may extend the \Polya-Gamma method used for binary logistic regression to
multinomial logistic regression.  Consider the multinomial sample $y_i =
\{y_{ij}\}_{j=1}^J$ that records the number of responses in each category $j=1,
\ldots, J$ and the total number of responses $n_i$.  The logistic link function
for polychotomous regression stipulates that the probability of randomly drawing
a single response from the $j$th category in the $i$th sample is
\[
p_{ij} = \frac{\exp {\psi_{ij}}}{\sum_{i=1}^J \exp {\psi_{ik}}}
\]
where the log odds $\psi_{ij}$ is modeled by $x_i^T \bbeta_j$ and $\bbeta_J$ has
been constrained to be zero for purposes of identification.  Following Holmes
and Held (2006) the likelihood for $\bbeta_j$ conditional upon $\bbeta_{-j}$,
the matrix with column vector $\bbeta_j$ removed, is
\[
\ell(\bbeta_j | \bbeta_{-j}, y) = \prod_{i=1}^N 
\left( \frac{ e^{\eta_{ij}} }{ 1+e^{\eta_{ij}} } \right)^{y_{ij}} 
\left( \frac{ 1 }{ 1+e^{\eta_{ij}} } \right)^{n_i - y_{ij}} 
\]
where 
\[
\eta_{ij} = x_i^T \bbeta_j - C_{ij} \textmd{ with } C_{ij} = \log \sum_{k \neq j}
\exp{x_i^T \bbeta_k},
\]
which looks like the binary logistic likelihood previously discussed.
Incorporating the \Polya-Gamma auxiliary variable, the likelihood becomes
\[
\prod_{i=1}^N e^{\kappa_{ij} \eta_{ij}} e^{-\frac{\eta_{ij}^2}{2}} \omega_{ij}
    PG(\omega_{ij} | n_i, 0)
\]
where $\kappa_{ij} = (y_{ij} - n_i / 2)$.  Employing the conditionally conjugate
prior $\bbeta_j \sim N(m_{0j}, V_{0j})$ yields a two-part update:
\begin{align*}
(\bbeta_j \mid \Omega_j) & \sim N(m_{j}, V_{j}) \\
(\omega_{ij} \mid \bbeta_j) & \sim PG(n_i, \eta_{ij}) \textmd{ for } i = 1, \cdots, N,
\end{align*}
where
\begin{align*}
V_j^{-1} & = X' \Omega_j X + V_{0j}^{-1}, \\
m_j & = V_j \left( X' (\bm{\kappa}_j - \Omega_j c_j) + V_{0j}^{-1} m_{0j} \right),
\end{align*}
$c_j$ is the $j$th column of $C$, and $\Omega_j =
\diag(\{\omega_{ij}\}_{i=1}^N)$.  One may sample the posterior distribution of
$(\bbeta \mid y)$ via Gibbs sampling by iterating over the above steps for $j=1,
\ldots, J-1$.

The \Polya-Gamma method generates samples from the joint posterior distribution without appealing to analytic approximations to the posterior.  This offers an important advantage when the number of observations is not significantly
larger than the number of parameters.  

To see this, consider sampling the joint posterior for $\bbeta$ using a Metropolis-Hastings algorithm with an independence proposal.  The likelihood in $\bbeta$ is approximately normal, centered at the posterior mode $m$, and with variance $V$ equal to the inverse of the Hessian matrix evaluated at the mode.  (Both of these may be found using standard numerical routines.)  Thus a natural proposal for $(\vect(\beta^{(t)}) \mid y)$ is $\vect(b) \sim N(m, aV)$ for some $a \approx 1$.  When data are plentiful, this method is both simple and highly efficient, and is implemented in many standard software packages \citep[e.g.][]{mcmcpack:2011}.

But when $\vect(\beta)$ is high-dimensional relative to the number of
observations the Hessian matrix $H$ may be ill-conditioned, making it impossible
or impractical to generate normal proposals.  Multinomial logistic regression
succumbs to this problem more quickly than binary logistic regression, as the
number of parameters scales like the product of the number of categories and the
number of predictors.

To illustrate this phenomenon, we consider glass-identification data from
\cite{glassdataset}.  This data set has $J=6$ categories of glass and nine
predictors describing the chemical and optical properties of the glass that one
may measure in a forensics lab and use in a criminal investigation.  This
generates up to $50 = 10 \times 5$ parameters, including the intercepts and the
constraint that $\beta_J = 0$.  These must be estimated using $n = 214$
observations.  In this case, the Hessian $H$ at the posterior mode is poorly
conditioned when employing a vague prior, incapacitating the independent
Metropolis-Hastings algorithm.  Numerical experiments confirm that even when a
vague prior is strong enough to produce a numerically invertible Hessian,
rejection rates are prohibitively high.  In contrast, the multinomial
\Polya-Gamma method still produces reasonable posterior distributions in a fully
automatic fashion, even with a weakly informative normal prior for each
$\beta_j$.  Table \ref{tab:mult-logit}, which shows the in-sample performance of
the multinomial logit model, demonstrates the problem with the joint proposal
distribution: category 6 is perfectly separable into cases and non-cases, even
though the other categories are not.  This is a well-known problem with
maximum-likelihood estimation of logistic models.  The same problem also
forecloses the option of posterior sampling using methods that require a unique
MLE to exist.

\begin{table}
\centering
\begin{tabular}{l c c c c c c}
Class   &  1 &  2 &  3 &  5 & 6 &  7 \\
\midrule
Total   & 70 & 76 & 17 & 13 & 9 & 29 \\
Correct & 50 & 55 & 0  &  9 & 9 & 27
\end{tabular}
\caption{``Correct'' refers to the number of glass fragments for each
category that were correctly identified by the Bayesian multinomial logit model.  The glass identification dataset includes a type of glass, class 4,
for which there are no observations.}
\label{tab:mult-logit}
\end{table}

\end{document}